\DeclarePairedDelimiter\ceil{\lceil}{\rceil}
\DeclarePairedDelimiter\floor{\lfloor}{\rfloor}
\theoremstyle{plain}% Theorem-like structures provided by amsthm.sty
\newtheorem{theorem}{Theorem}[section]
\newtheorem{lemma}[theorem]{Lemma}
\newtheorem{conjecture}[theorem]{Conjecture}
\theoremstyle{definition}
\newtheorem{definition}[theorem]{Definition}
\theoremstyle{remark}
\def\mathcolor#1#{\@mathcolor{#1}}
\def\@mathcolor#1#2#3{%
  \protect\leavevmode
  \begingroup
    \color#1{#2}#3%
  \endgroup
}
\newcommand\gt[5]{
    ${\scriptstyle
    \mathcolor[rgb]{0.8000, 0.0000, 0.0000}{#1}
    \mathcolor[rgb]{0.9882, 0.6863, 0.2431}{#2}
    \mathcolor[rgb]{0.4471, 0.6235, 0.8118}{#3}
    \mathcolor[rgb]{0.4471, 0.6235, 0.8118}{\textbf{#4}}
    \mathcolor[rgb]{0.9882, 0.6863, 0.2431}{\textbf{#5}}
    }$
}
\newcommand\gf[5]{
    ${\scriptstyle
    f(
    \mathcolor[rgb]{0.8000, 0.0000, 0.0000}{#1}
    \mathcolor[rgb]{0.9882, 0.6863, 0.2431}{#2}
    \mathcolor[rgb]{0.4471, 0.6235, 0.8118}{#3}
    \mathcolor[rgb]{0.4471, 0.6235, 0.8118}{\textbf{#4}}
    \mathcolor[rgb]{0.9882, 0.6863, 0.2431}{\textbf{#5}}
    )
    }$
}
\newcommand\g[1]{
    {\footnotesize \emph{#1}}
}
\newcommand\vfigbegin[0]{
    \begin{center}
    \begin{minipage}{\textwidth}
    \begin{center}
}
\newcommand\vfigend[3]{
    \captionof{figure}[#3]{#2}
    \label{#1}
    \end{center}
    \end{minipage}
    \end{center}
}
\newcommand\image[3]{% width, height, filename
    \includegraphics[width=#1,height=#2]{#3}
}
\newcommand% width, filename
    \def\svgwidth{[}
    \def\svgwidth{#1}
\begin{document}

\title{Quasiperiodic bobbin lace patterns}

\author{
Veronika Irvine\\
\small Cheriton School of Computer Science\\[-0.8ex]
\small University of Waterloo\\[-0.8ex]
\small Waterloo ON, Canada\\
\and
Therese Biedl\\
\small Cheriton School of Computer Science\\[-0.8ex]
\small University of Waterloo\\[-0.8ex]
\small Waterloo ON, Canada\\
\and
Craig S. Kaplan\\
\small Cheriton School of Computer Science\\[-0.8ex]
\small University of Waterloo\\[-0.8ex]
\small Waterloo ON, Canada\\
}

\maketitle

\begin{abstract}
Bobbin lace is a fibre art form in which threads are braided together to form a fabric, often with a very detailed and complex design.  In traditional practice, each region of the fabric is filled with a periodic texture.  We establish the groundwork for non-periodic lace patterns and present three new quasiperiodic families based on Sturmian words, the Penrose tiling by thick and thin rhombs and the Ammann-bar decoration of the Penrose tiling.
\end{abstract}

\section{Introduction}

\vfigbegin
    \image{0.7\columnwidth}{!}{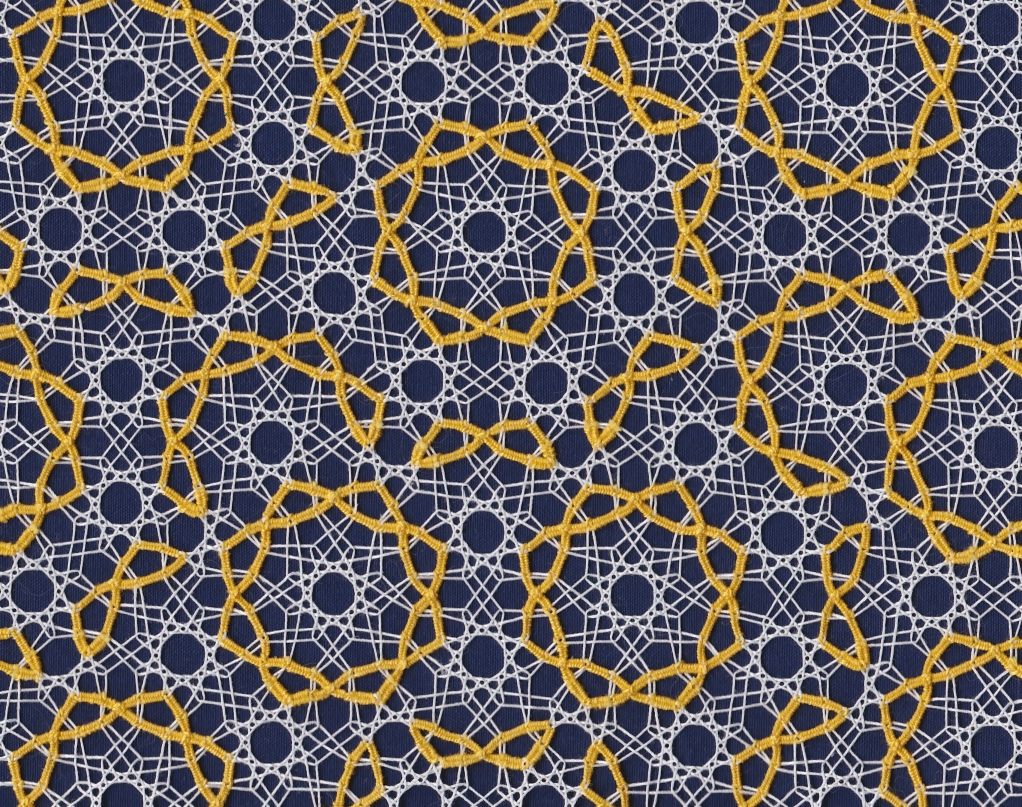}
\vfigend{fig:lace2}{`Nodding bur-marigold', Veronika Irvine 2019: dual of Ammann-Beenker tiling worked in DMC~Cordonnet~Special~80 cotton thread.}{Lace Ammann-Beenker}

Bobbin lace is a 500-year-old fibre art in which threads are braided together to form a fabric characterized by many holes.
In bobbin lace, a closed region, which may be part of the background or, less commonly, part of the foreground\footnote{Traditionally, regions of a foreground figure or motif are filled with a plain (cloth-stitch) or triaxial (half-stitch) weave or a combination of both. Contemporary artists, such as Pierre Fouch\'{e}, are starting to include decorative patterns in foreground regions.}, is filled with a decorative pattern.  A variety of patterns are used to distinguish different forms, provide shading and create aesthetic interest.
Traditionally, these patterns, called \emph{grounds} or \emph{fillings} by lacemakers, consist of a small design repeated at regular intervals horizontally and vertically.
Contemporary lace pieces also employ very regular grounds, although modern lacemakers have experimented with `random' fillings in which stitches in a traditional ground are performed without a fixed sequence~\cite[Chapter~17]{moderne}.
For a more detailed description of bobbin lace and an example pattern, we refer the reader to previous work by Irvine and Ruskey~\cite{lace1}.

Regular fillings with crystallographic symmetry create pleasing designs and in small areas provide sufficient complexity to be interesting.  However, modern lacemakers are tending towards large scale, abstract pieces with sizeable areas of a homogeneous texture.  At this scale, the simplicity of translational repetition can dominate the piece, overwhelming the more subtle reflection and rotation symmetries.  From a distance, the viewer perceives only a regular grid of holes which, on closer inspection, may give rise to a perception of motion due to a simultaneous lightness contrast illusion similar to the scintillating grid~\cite{ninio} (in particular when realized in white thread on a black background).
We are interested in breaking up the integral regularity of translation in the pattern while still maintaining local areas of symmetry. This characteristic appears in the famous Penrose tilings, which belong to the larger family of quasiperiodic tilings.  We will formally define quasiperiodicity in Section~\ref{sec:problem} but for now we can think of it informally as uniform repetition within a pattern that globally lacks periodic symmetry.

A quasiperiodic tiling contains small patches of recognizable symmetry, providing a familiar handle for the viewer to approach the design. The recognizable patches are arranged in a large variety of ways challenging the viewer to identify them.
Several philosophers have attributed aesthetic appeal to just such a relationship.  The 18th century Scottish philosopher Hutcheson proposed a calculus of beauty which he described as ``\emph{the compound ratio of uniformity and variety}'' \cite{hutcheson}.
Another 18th century British philosopher, Gerard, ascribed aesthetic pleasure to an activity of the imagination:
\begin{adjustwidth}{1cm}{1cm}
\emph{``\dots uniformity, when perfect and unmixed, is apt to pall upon the sense, to grow languid, and to sink the mind into an uneasy state of indolence.  It cannot therefore alone produce pleasure, either very high, or of very long duration. \dots Variety in some measure gratifies the sense of novelty, as our ideas vary in passing from the contemplation of one part to that of another. This transition puts the mind in action, and gives it employment, the consciousness of which is agreeable.''}~\cite{gerard}
\end{adjustwidth}
Closer to our own time, art historian Gombrich explored the importance of decoration and conjectured:``\emph{Aesthetic delight lies somewhere between boredom and confusion}.''~\cite{gombrich}

In this paper we explore the role that quasiperiodicity can play in lacemaking.  To do that, we extend the mathematical model for bobbin lace so that it can describe non-periodic patterns.  We look at the P3 tiling, consisting of thick and thin rhombs, as well as the Ammann bar decorations of this tiling and derive workable bobbin lace patterns from them.  Several new quasiperiodic lace patterns are then realized in thread.

\section{Problem description}
\label{sec:problem}
Let us start by briefly summarizing previous work on creating a mathematical model for bobbin lace grounds \cite{irvinePhD, lace1, lace2}.
A bobbin lace ground is a small pattern used to fill a closed shape.
Although the lace that includes this ground is obviously finite in size, rarely larger than a tablecloth, we can imagine that the ground may extend arbitrarily far in any direction and is therefore unbounded. Traditionally, the ground fills this space by translating a small rectangular patch of the pattern in two orthogonal directions such that the copies fit together edge to edge. In the mathematics of tiling theory, such patterns are called \emph{periodic}.

Bobbin lace is created by braiding four threads at a time.  Threads travel from one 4-stranded braid to the next in pairs.  We can therefore divide the ground into two independent components: a drawing that captures the flow of pairs of threads from one braid to another, and a description of the braid formed each time four threads, or more specifically two pairs, meet (as shown in Figure~\ref{fig:graph}.

More formally, we can represent a bobbin lace ground as the pair $(\Gamma(G),\zeta(v))$ where $G$ is a directed graph embedding that captures the flow of pairs of threads from one braid to another,
$\Gamma(G)$ is  a  specific  drawing  of $G$ that  assigns  a  position  to every vertex,  and $\zeta(v)$  is  a  mapping  from  a  vertex $v \in V(G)$  to a mathematical braid word which specifies the over and under crossings performed on the subset of four threads meeting at $v$.

\vfigbegin
    % width, filename
    \def\svgwidth{0.9\columnwidth}
    %% Creator: Inkscape inkscape 0.92.3, www.inkscape.org
%% PDF/EPS/PS + LaTeX output extension by Johan Engelen, 2010
%% Accompanies image file '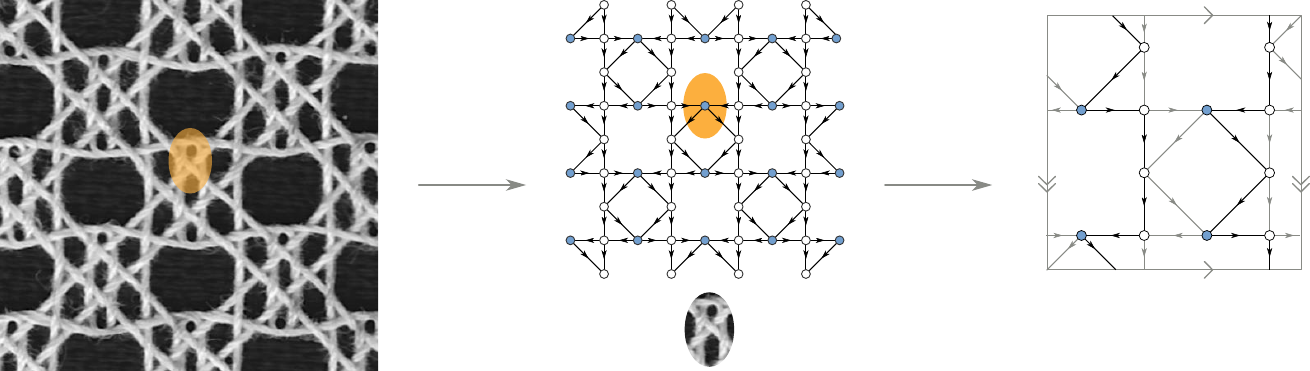' (pdf, eps, ps)
%%
%% To include the image in your LaTeX document, write
%%   \input{<filename>.pdf_tex}
%%  instead of
%%   \includegraphics{<filename>.pdf}
%% To scale the image, write
%%   \def\svgwidth{<desired width>}
%%   \input{<filename>.pdf_tex}
%%  instead of
%%   \includegraphics[width=<desired width>]{<filename>.pdf}
%%
%% Images with a different path to the parent latex file can
%% be accessed with the `import' package (which may need to be
%% installed) using
%%   \usepackage{import}
%% in the preamble, and then including the image with
%%   \import{<path to file>}{<filename>.pdf_tex}
%% Alternatively, one can specify
%%   \graphicspath{{<path to file>/}}
%% 
%% For more information, please see info/svg-inkscape on CTAN:
%%   http://tug.ctan.org/tex-archive/info/svg-inkscape
%%
\begingroup%
  \makeatletter%
  \providecommand\color[2][]{%
    \errmessage{(Inkscape) Color is used for the text in Inkscape, but the package 'color.sty' is not loaded}%
    \renewcommand\color[2][]{}%
  }%
  \providecommand\transparent[1]{%
    \errmessage{(Inkscape) Transparency is used (non-zero) for the text in Inkscape, but the package 'transparent.sty' is not loaded}%
    \renewcommand\transparent[1]{}%
  }%
  \providecommand\rotatebox[2]{#2}%
  \newcommand*\fsize{\dimexpr\f@size pt\relax}%
  \newcommand*\lineheight[1]{\fontsize{\fsize}{#1\fsize}\selectfont}%
  \ifx\svgwidth\undefined%
    \setlength{\unitlength}{377.26940918bp}%
    \ifx\svgscale\undefined%
      \relax%
    \else%
      \setlength{\unitlength}{\unitlength * \real{\svgscale}}%
    \fi%
  \else%
    \setlength{\unitlength}{\svgwidth}%
  \fi%
  \global\let\svgwidth\undefined%
  \global\let\svgscale\undefined%
  \makeatother%
  \begin{picture}(1,0.2824646)%
    \lineheight{1}%
    \setlength\tabcolsep{0pt}%
    \put(0,0){\includegraphics[width=\unitlength,page=1]{graph2.pdf}}%
    \put(0.68755481,0.19133896){\color[rgb]{0,0,0}\makebox(0,0)[lt]{\lineheight{1.25}\smash{\begin{tabular}[t]{l}\small{$\Gamma(G) =$}\end{tabular}}}}%
    \put(0.79886301,0.0440272){\color[rgb]{0,0,0}\makebox(0,0)[lt]{\lineheight{1.25}\smash{\begin{tabular}[t]{l}\small{$\zeta(v_{blue}) = CTCT$}\end{tabular}}}}%
    \put(0.79886301,0.00824376){\color[rgb]{0,0,0}\makebox(0,0)[lt]{\lineheight{1.25}\smash{\begin{tabular}[t]{l}\small{$\zeta(v_{white}) = CT$}\end{tabular}}}}%
  \end{picture}%
\endgroup%

\vfigend{fig:graph}{From bobbin lace ground to mathematical representation as a graph drawing and a set of braid words, one for each vertex in the graph}{Graph drawing representation}

We are interested in ground patterns that can actually be realized by a lacemaker.  To this end, Irvine and Ruskey identified four necessary conditions that the drawing $\Gamma(G)$ must meet:

\begin{itemize}
\item[CP1.] \textbf{2-2-regular digraph:} Two pairs come together to form a braid and, when the braid is complete, the four threads continue on, pair-wise, to participate in other braids.  Therefore, $G$ is a directed 2-2-regular digraph, meaning that every vertex has two incoming edges and two outgoing edges.
\item[CP2.] \textbf{Connected filling of unbounded size:} The purpose of a bobbin lace ground is to fill a simple region of any size with a continuous fabric.  One way to accommodate large sizes is to choose a pattern that can cover the infinite plane.  In traditional lace, this is accomplished with a periodic pattern.  As shown on the right side of Figure~\ref{fig:graph}, a parallelogram with the smallest area that captures the translational period of the pattern, called a unit cell, can represent the entire pattern. Threads leaving the bottom edge of the parallelogram continue as threads entering the top edge of the next repeat; threads leaving the right edge enter the next repeat on the left edge and vice versa.
    We refer to this representation as a \emph{flat torus}.
    For the resulting fabric to hold together as one piece, it is important that translated copies of the pattern are connected to each other.  We express this formally by saying that the graph embedding of $G$ must have an oriented genus of 1 (i.e. it must be an embedding on the torus).
\item[CP3.] \textbf{Partially ordered:} Bobbin lace is essentially an alternating braid and therefore the thread crossings must have a partial order, or put another way, all thread crossings must happen in the forward direction.  For the graph embedded on the torus, this means all directed circuits of $G$ are non-contractible.  A contractible circuit can be reduced to a point by shortening the lengths of its edges.
\item[CP4.] \textbf{Thread conserving:} Loose ends, caused by cutting threads or adding new ones, are undesirable because they inhibit the speed of working, can fray or stick out in an unsightly manner and, most importantly, degrade the strength of the fabric. For the model, this means that, once started, a rectangular region of arbitrary width $w$ can be worked to any length without the addition or termination of threads (assuming that sufficient thread has been wound around the bobbins). We refer to this ability to extend the pattern indefinitely in one direction, using a fixed set of threads, as ``conservation of threads''.
    We say that a pair of paths in a graph drawing is \emph{osculating} if, when the two paths meet, they do not cross transversely but merely kiss, i.e., touch and continue, without crossing, as shown in Figure~\ref{fig:partition}.
    In order to ensure that the pattern conserves threads, when the edges of $\Gamma(G)$ are partitioned into directed osculating circuits we must ensure that each circuit in the partition is in the $(1,0)$-homotopy class of the torus (Adams gives an excellent  description of the torus-knot naming convention~\cite{adams}).  That is, each circuit must wrap once around the minor (vertical) axis of the flat torus and zero times around its major (non-vertical) axis.  This ensures that the threads do not have a net drift left or right but rather return to the same horizontal position at the start of each vertical repeat.
\end{itemize}

 \begin{center}
 \begin{minipage}{\textwidth}
 \begin{center}
 % width, filename
    \def\svgwidth{0.9\columnwidth}
    %% Creator: Inkscape inkscape 0.92.4, www.inkscape.org
%% PDF/EPS/PS + LaTeX output extension by Johan Engelen, 2010
%% Accompanies image file '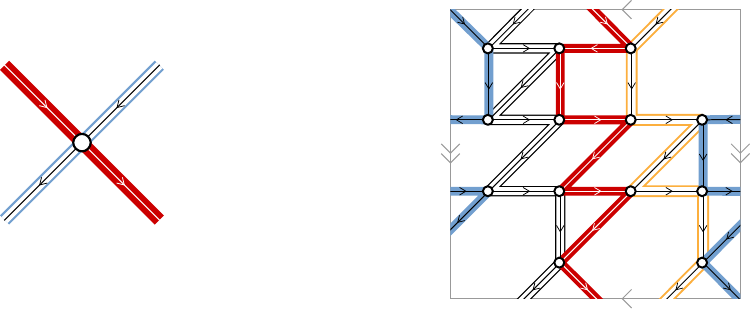' (pdf, eps, ps)
%%
%% To include the image in your LaTeX document, write
%%   \input{<filename>.pdf_tex}
%%  instead of
%%   \includegraphics{<filename>.pdf}
%% To scale the image, write
%%   \def\svgwidth{<desired width>}
%%   \input{<filename>.pdf_tex}
%%  instead of
%%   \includegraphics[width=<desired width>]{<filename>.pdf}
%%
%% Images with a different path to the parent latex file can
%% be accessed with the `import' package (which may need to be
%% installed) using
%%   \usepackage{import}
%% in the preamble, and then including the image with
%%   \import{<path to file>}{<filename>.pdf_tex}
%% Alternatively, one can specify
%%   \graphicspath{{<path to file>/}}
%% 
%% For more information, please see info/svg-inkscape on CTAN:
%%   http://tug.ctan.org/tex-archive/info/svg-inkscape
%%
\begingroup%
  \makeatletter%
  \providecommand\color[2][]{%
    \errmessage{(Inkscape) Color is used for the text in Inkscape, but the package 'color.sty' is not loaded}%
    \renewcommand\color[2][]{}%
  }%
  \providecommand\transparent[1]{%
    \errmessage{(Inkscape) Transparency is used (non-zero) for the text in Inkscape, but the package 'transparent.sty' is not loaded}%
    \renewcommand\transparent[1]{}%
  }%
  \providecommand\rotatebox[2]{#2}%
  \newcommand*\fsize{\dimexpr\f@size pt\relax}%
  \newcommand*\lineheight[1]{\fontsize{\fsize}{#1\fsize}\selectfont}%
  \ifx\svgwidth\undefined%
    \setlength{\unitlength}{216bp}%
    \ifx\svgscale\undefined%
      \relax%
    \else%
      \setlength{\unitlength}{\unitlength * \real{\svgscale}}%
    \fi%
  \else%
    \setlength{\unitlength}{\svgwidth}%
  \fi%
  \global\let\svgwidth\undefined%
  \global\let\svgscale\undefined%
  \makeatother%
  \begin{picture}(1,0.4111029)%
    \lineheight{1}%
    \setlength\tabcolsep{0pt}%
    \put(0,0){\includegraphics[width=\unitlength,page=1]{conservation_periodic.pdf}}%
    \put(0.10918848,0.05429235){\color[rgb]{0,0,0}\makebox(0,0)[t]{\lineheight{1.25}\smash{\begin{tabular}[t]{c}\g{transverse}\end{tabular}}}}%
    \put(0,0){\includegraphics[width=\unitlength,page=2]{conservation_periodic.pdf}}%
    \put(0.40310972,0.05409562){\color[rgb]{0,0,0}\makebox(0,0)[t]{\lineheight{1.25}\smash{\begin{tabular}[t]{c}\g{osculating}\end{tabular}}}}%
  \end{picture}%
\endgroup%

 \captionof{figure}[]{At a 2-in, 2-out vertex, two paths meet in either a transverse or an osculating manner.  An example of a graph drawing partitioned into osculating paths.}
 \label{fig:partition}
 \end{center}
 \end{minipage}
 \end{center}

%\vfigbegin
%    \imagepdftex{0.9\columnwidth}{conservation_periodic.pdf_tex}
%\vfigend{fig:partition}{At a 2-in, 2-out vertex, two paths meet in either a transverse or an osculating manner.  An example of a graph drawing partitioned into osculating paths.}

For a more detailed description of these conditions, we refer the reader to previous work by Irvine and Ruskey~\cite{lace1} or Biedl and Irvine~\cite{lace2}.
Irvine used conditions $CP1$--$CP4$ to prove that there exist an infinite number of periodic bobbin lace grounds~\cite{irvinePhD}.  She also performed a combinatorial search to generate several million examples for graph drawings with up to 20 vertices~\cite{irvinePhD}.
However, the conditions introduced previously are only applicable to periodic patterns. Here, we introduce a new set of conditions that cover any infinite pattern, including those that are non-periodic.

Although we represent the flow of threads as a drawing of one-dimensional curves, lace is realized in a physical medium (such as cotton thread) which has width.  There are physical limitations on how close together two braids can be before they will overlap or how close two pins can be before the braid between them buckles out of plane.  Conversely, if a hole in the lace ground is too wide, it may be wider than the region we are trying to fill, slicing the patch of fabric into two unconnected parts.
We must therefore specify a set of minimum and maximum feature sizes, namely a lower and upper bound on the length of an edge and a lower and upper bound on the distance between two vertices of a face.
 In previous work on periodic grounds, we took these bounds for granted because they are determined by the boundary of the unit cell and $CP4$.  Here we will include them in a formal way by first defining a few terms:

\begin{definition}\cite{grunbaum}
The faces of a graph drawing are \emph{uniformly bounded} if there exist positive real numbers $r$ and $R$, $r < R$, such that every face contains a ball of radius $r$ and every face is contained in a ball of radius $R$.
\end{definition}

In any tiling with a finite number of distinct tile shapes (or specifically, a graph drawing with a finite number of distinct face shapes), the tiles will be uniformly bounded.

\begin{definition}\cite{senechal}
A \emph{Delone point set} is uniformly discrete and relatively dense.
A set of points $\mathcal{P}$ in $\mathbb{R}^n$ is \emph{uniformly discrete} if there exists a positive real number $d$ such that given any two points $x$ and $y$ in $\mathcal{P}$, the distance between them is at least $2d$.
A set of points $\mathcal{P}$ in $\mathbb{R}^n$ is \emph{relatively dense} if there exists a positive real number $D$ such that every ball with radius greater than $D$ contains at least one point of $\mathcal{P}$ in its interior.
\end{definition}

\begin{itemize}
\item[C0.] \textbf{Bounded in feature size:} The faces of the graph drawing are uniformly bounded and the vertices form a Delone
point set.
\end{itemize}

The remaining conditions are a generalization of $CP1$ --- $CP4$.

\begin{itemize}
\item[C1.] \textbf{2-2-regular digraph:} The underlying graph $G$, representing the flow of pairs of threads between braids, is a 2-2-regular digraph.
\item[C2.] \textbf{Connected filling of unbounded size:} The lace must fill a 2D region of unbounded size with a continuous fabric.  The graph $G$ must therefore be infinite (also implied by $C0$) and connected.
\item[C3.] \textbf{Partially ordered:} Bobbin lace is braided, which means that all crossings must happen in the forward direction; therefore, the planar embedding of $G$ cannot contain a directed cycle.  This implies that the planar embedding of $G$ is simple (faces have degree at least 3): it does not contain any self-loops; either the edges of a bigon (a face of degree two) form a cycle or the edges and vertices of a bigon represent a continuous braid made on the same four threads which, by our mapping, corresponds to a single vertex in the graph.
\item[C4.] \textbf{Thread conserving:} For threads to be conserved, there must exist a partition of the plane graph drawing, $\Gamma(G)$, into a set of \emph{well-behaved} osculating paths.  A path is well-behaved if there exists a line $\ell$ and a finite distance $s$ such that every point on the path is within a perpendicular distance $s$ of $\ell$.
\end{itemize}

It is worth noting that a necessary condition for $C3$ is that the outgoing edges of an infinite digraph embedded in the plane must be \emph{rotationally consecutive} (i.e., adjacent to one another in clockwise order around their common endpoint), a result that was demonstrated by Irvine and Ruskey~\cite{lace1}.  This rotationally consecutive ordering coupled with condition $C1$--a regular digraph with in-degree equal to out-degree--also ensures that we can partition the graph into a unique set of osculating paths~\cite{lace1}.

Conditions $C0$--$C4$ give a generalized model of bobbin lace grounds that encompass both periodic and non-periodic patterns. Every lace pattern can be represented as a planar infinite graph.  Every simple planar graph can be rendered as a straight-line graph drawing~\cite{fary}, which in turn defines a tiling of the plane.  We can now define a quasiperiodic tiling and ask the question `Does there exist a quasiperiodic drawing of an infinite graph that meets conditions $C0$--$C4$?'.

\begin{definition}\cite{durand, delvenne}
A \emph{quasiperiodic tiling} is a tiling $T$ such that for every patch $P$ (a finite, simply connected subset of tiles in $T$), there exists a real number $b>0$ such that a ball of radius $b$, centered on any point in the tiling,
contains a copy of $P$.  In addition, the tiling is not periodic, that is, a copy $T'$ of $T$ cannot be superimposed on $T$ by a non-trivial translation.
\end{definition}

In the rest of this paper, we will look at different families of quasiperiodic patterns.  When trying to assess whether a quasiperiodic pattern is a good candidate for a bobbin lace ground, we will pay particular attention to two conditions: Do all of the vertices have degree four ($C1$)?  Can we assign a direction to the edges to give a well-behaved osculating partition ($C4$)?

In the next section, we start our exploration with a very simple family of non-periodic patterns composed from parallelograms arranged in a grid.

\section{Parallelogram tiling from two sets of parallel lines}
\label{sec:two}
Obviously a periodic grid will satisfy $C0$--$C4$ so we ask the question ``What minimal alterations will break periodicity while still meeting these conditions?''

To start, consider a simple periodic bigrid formed by overlaying two infinite sets of regularly spaced parallel lines, $A$ and $B$, at a relative angle of $\alpha > 0$, where $\alpha$ is the small angle between lines $A_i$ and $B_j$ for any $A_i \in A$ and $B_j \in B$.
To turn this into a planar graph drawing, we place a vertex at every crossing of two lines and assign an edge between pairs of vertices that are consecutive along a line.  We observe that the faces in this drawing are all parallelograms and the planarity of the drawing satisfies $C2$.

To assign a direction to each edge of the drawing, we first choose a vector that is not perpendicular to any set of lines and then rotate the drawing so that this vector points up.  We can now unambiguously assign a downward direction to each edge because there are no horizontal edges.  The directed graph drawing thus created has two useful properties.  First, each vertex has two incoming and two outgoing edges such that the outgoing edges are rotationally consecutive, satisfying $C1$.  Second, no directed cycle can exist because there are no edges directed `up' to complete the cycle, satisfying condition $C3$.\footnote{The edge directions specified here fix a start and end vertex for each edge.  Once the edge directions have been assigned, rotating the drawing such that an edge is no longer strictly pointing down is not a problem because the topological direction of the edges is still preserved.}

Adherence to conditions $C1$--$C3$ is determined by the topology of the pattern.  As long as any change we make to the geometry of the drawing does not alter its topology, the drawing will still meet these conditions.
We now consider modifications to the geometry.  We can turn our drawing into a non-periodic pattern by changing the spacing between lines.
For simplicity, we will restrict the allowed spacings to two values: long $(L)$ and short $(S)$.  Under this restriction, the faces in the drawing are limited to four parallelogram classes: $SS$, $LL$, $SL$ and $LS$.  The largest circle that fits inside every tile of the tile set has radius $(|S|\sin{\alpha})/2$ and the smallest circle that can contain every tile of the tile set has radius less than $|L|$.  No pair of vertices in the tiling are closer to each other than $|S|/\sin{\alpha}$, and every vertex is at most $2|L|$ from some other vertex.  Since $\alpha$ is greater than $0$, the vertices form a Delone point set, the faces are uniformly bounded, and $C0$ is satisfied.

Our goal is to create a pattern that is not periodic but that still has some regularity.  To achieve this, instead of randomly choosing whether to use $L$ or $S$ spacing between two lines in a set, we will use a quasiperiodic word.
An infinite word $W$ is \emph{quasiperiodic} if and only if, for every finite substring $w$ of $W$, there exists an integer $b$ such that every substring of length $b$ in $W$ contains at least one copy of $w$~\cite{delvenne}.  We can think of it as a quasiperiodic tiling in one dimension.

In Figure~\ref{fig:bigrid}, we show a pattern that uses the infinite binary Fibonacci word to determine the spacing.
The sequence of characters in the Fibonacci word is defined by the following recursive expansion relation: $\sigma(L)=LS$, $\sigma(S)=L$. For example, starting with the word $L$, we have $\sigma(L)=LS$, $\sigma^2(L)=LSL$, $\sigma^3(L)=LSLLS$, $\sigma^4(L)=LSLLSLSL$ and so on.

 \begin{center}
 \begin{minipage}{\textwidth}
 \begin{center}
 % width, filename
    \def\svgwidth{0.6\columnwidth}
    %% Creator: Inkscape inkscape 0.92.4, www.inkscape.org
%% PDF/EPS/PS + LaTeX output extension by Johan Engelen, 2010
%% Accompanies image file '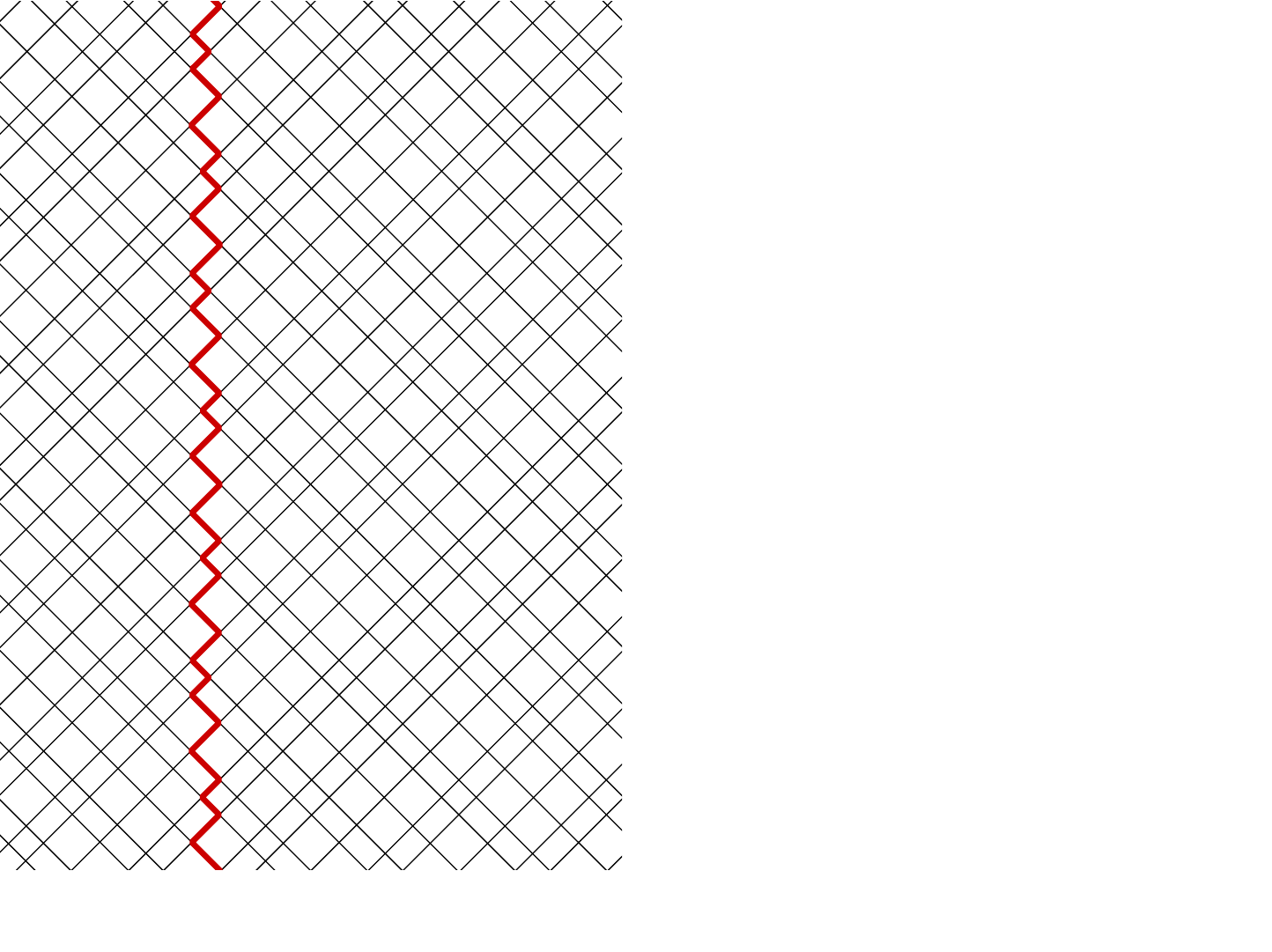' (pdf, eps, ps)
%%
%% To include the image in your LaTeX document, write
%%   \input{<filename>.pdf_tex}
%%  instead of
%%   \includegraphics{<filename>.pdf}
%% To scale the image, write
%%   \def\svgwidth{<desired width>}
%%   \input{<filename>.pdf_tex}
%%  instead of
%%   \includegraphics[width=<desired width>]{<filename>.pdf}
%%
%% Images with a different path to the parent latex file can
%% be accessed with the `import' package (which may need to be
%% installed) using
%%   \usepackage{import}
%% in the preamble, and then including the image with
%%   \import{<path to file>}{<filename>.pdf_tex}
%% Alternatively, one can specify
%%   \graphicspath{{<path to file>/}}
%% 
%% For more information, please see info/svg-inkscape on CTAN:
%%   http://tug.ctan.org/tex-archive/info/svg-inkscape
%%
\begingroup%
  \makeatletter%
  \providecommand\color[2][]{%
    \errmessage{(Inkscape) Color is used for the text in Inkscape, but the package 'color.sty' is not loaded}%
    \renewcommand\color[2][]{}%
  }%
  \providecommand\transparent[1]{%
    \errmessage{(Inkscape) Transparency is used (non-zero) for the text in Inkscape, but the package 'transparent.sty' is not loaded}%
    \renewcommand\transparent[1]{}%
  }%
  \providecommand\rotatebox[2]{#2}%
  \newcommand*\fsize{\dimexpr\f@size pt\relax}%
  \newcommand*\lineheight[1]{\fontsize{\fsize}{#1\fsize}\selectfont}%
  \ifx\svgwidth\undefined%
    \setlength{\unitlength}{415.6162262bp}%
    \ifx\svgscale\undefined%
      \relax%
    \else%
      \setlength{\unitlength}{\unitlength * \real{\svgscale}}%
    \fi%
  \else%
    \setlength{\unitlength}{\svgwidth}%
  \fi%
  \global\let\svgwidth\undefined%
  \global\let\svgscale\undefined%
  \makeatother%
  \begin{picture}(1,0.74869458)%
    \lineheight{1}%
    \setlength\tabcolsep{0pt}%
    \put(0,0){\includegraphics[width=\unitlength,page=1]{bigrid2.pdf}}%
    \put(0.24502624,0.00607627){\color[rgb]{0,0,0}\makebox(0,0)[t]{\lineheight{1.25}\smash{\begin{tabular}[t]{c}\g{(a)}\end{tabular}}}}%
    \put(0.75481993,0.00607627){\color[rgb]{0,0,0}\makebox(0,0)[t]{\lineheight{1.25}\smash{\begin{tabular}[t]{c}\g{(b)}\end{tabular}}}}%
    \put(0,0){\includegraphics[width=\unitlength,page=2]{bigrid2.pdf}}%
  \end{picture}%
\endgroup%

 \captionof{figure}[]{A simple quasiperiodic lace pattern created by superimposing two sets of parallel lines with interline spacing determined by the Fibonacci word: a) pattern with one path from osculating partition highlighted in bold red, b) pattern worked as bobbin lace, tallies (fat braids) at vertices incident to four squares of length $L$.}
 \label{fig:bigrid}
 \end{center}
 \end{minipage}
 \end{center}

%\vfigbegin
%    \imagepdftex{0.6\columnwidth}{bigrid2.pdf_tex}
%\vfigend{fig:bigrid}{A simple quasiperiodic lace pattern created by superimposing two sets of parallel lines with interline spacing determined by the Fibonacci word: a) pattern with one path from osculating partition highlighted in bold red, b) pattern worked as bobbin lace, tallies (fat braids) at vertices incident to four squares of length $L$.}

To test compliance with $C4$, we will use the  \emph{Hamming weight} of a binary word, which can be defined here as the number of occurrences of $S$ within that word.  For example, substring $w=LLSLLSLSL$ of the Fibonacci word has Hamming weight $3$.  Lothaire showed that the Fibonacci word is \emph{balanced}: given any two equal length substrings, their Hamming weights differ by at most one~\cite{lothaire}.

We partition our graph drawing into a set of osculating paths and consider one such path $P$.  If $P$ follows two consecutive edges in line set $A$, it must transversely cross another path, contradicting the definition of an osculating path, therefore, the steps in $P$ alternate between an edge from line set $A$ and an edge from line set $B$.

Consider $\ell$, the least squares regression line (also known as the Deming regression line) derived from the vertex positions of $P$.  A least squares regression line minimizes the sum, over all vertices $v$ in $P$, of the perpendicular distance from $v$ to $\ell$~\cite{glaister}.
A walk of $2k$ steps along path $P$ can be described by two equal length substrings of the Fibonacci word which we shall call $w_A$ (steps along consecutive lines in $A$) and $w_B$ (steps along consecutive lines in $B$).  Because the line spacing of our bigrid is based on a balanced word, $w_A$  and $w_B$ have the same number of $S$ steps, plus or minus one, for all values of $k$.
Therefore, for any walk along $P$, the distance travelled along lines in set $A$, in a left to right direction relative to $\ell$ is equal or very nearly equal to the distance travelled along lines in set $B$, right to left relative to $\ell$.  When a one step discrepancy occurs, it does not accumulate because $w_A$ and $w_B$ are balanced for all values of $k$.

We observe that $\ell$ is parallel to the bisector of the angle between $A$ and $B$. For the Fibonacci word, the maximum perpendicular distance from a vertex of $P$ to the line $\ell$ is $\frac{1}{2}(L+S)\sin{\alpha}$.
The final step to satisfy $C4$ is to orient the pattern so that line $\ell$ is vertical.

The Fibonacci word belongs to a larger family called the Sturmian words. This family of infinite binary words is characterized as being non-periodic and having a balanced Hamming weight~\cite{lothaire}.  Additional lace patterns can be obtained by varying the line spacing using an alternative Sturmian word, such as the Octonacci or Thue--Morse sequences~\cite{walter}.

\begin{theorem}
The graph drawing induced by a bigrid, with line spacing determined by a Sturmian word, has a set of edge directions that satisfies conditions $C0$--$C4$ and thus forms a workable bobbin lace pattern.
\end{theorem}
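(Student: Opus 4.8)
The plan is to verify each of the five conditions $C0$--$C4$ in turn for the directed graph drawing obtained from a bigrid whose two families of parallel lines, $A$ and $B$, are spaced according to a Sturmian word. Much of the groundwork has already been laid in the discussion preceding the statement, so several conditions follow almost immediately; the real content lies in $C4$.

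First I would dispatch $C0$ through $C3$. For $C0$, since the line spacings take only the two values $|L|$ and $|S|$, every face is one of the four parallelogram types $SS$, $LL$, $SL$, $LS$; the inscribed and circumscribed radii computed above (at least $(|S|\sin\alpha)/2$ from below, less than $|L|$ from above) show the faces are uniformly bounded, while the vertex-separation bounds ($|S|/\sin\alpha$ from below, $2|L|$ from above) show the vertices form a Delone set, using $\alpha>0$. For $C1$, I choose a direction vector not perpendicular to either line family and rotate so it points upward; orienting each edge downward then makes every vertex $2$-in, $2$-out with its two outgoing edges rotationally consecutive. Condition $C2$ holds because the parallelogram tiling is planar, infinite, and connected, and $C3$ holds because no edge points upward, so no directed cycle can close.

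The crux is $C4$: exhibiting a partition of $\Gamma(G)$ into well-behaved osculating paths. I would first argue that in any osculating path $P$ the steps must alternate between an $A$-edge and a $B$-edge, since two consecutive edges from the same family would force a transverse crossing and contradict the osculating condition. Taking $\ell$ to be the least-squares (Deming) regression line through the vertices of $P$, I would show $\ell$ is parallel to the bisector of the angle between $A$ and $B$. A walk of $2k$ steps along $P$ traverses a length-$k$ substring $w_A$ of the Sturmian word along $A$ and a length-$k$ substring $w_B$ along $B$; the displacement component perpendicular to the bisector is then controlled by the difference in the number of $S$-spacings, i.e.\ by the difference in Hamming weights of $w_A$ and $w_B$.

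The main obstacle --- and the point where the Sturmian hypothesis is essential --- is bounding this perpendicular deviation \emph{uniformly} in $k$, so that ``well-behaved'' holds with a single finite $s$. Here I would invoke the balance property of Sturmian words: any two equal-length substrings have Hamming weights differing by at most one. Consequently the perpendicular drift never accumulates beyond a single $S$-versus-$L$ discrepancy, which bounds the distance from any vertex of $P$ to $\ell$ by $\frac{1}{2}(L+S)\sin\alpha$. Thus every osculating path lies within a fixed perpendicular band of its regression line and is well-behaved. Finally, rotating the whole drawing so that $\ell$ is vertical --- permissible because the topological edge directions are preserved under rotation --- completes the verification of $C4$, and with it the claim that the construction yields a workable bobbin lace pattern.
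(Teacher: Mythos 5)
Your proposal is correct and follows essentially the same route as the paper: quick verification of $C0$--$C3$ via the downward edge orientation and the four parallelogram classes, then $C4$ via the alternation of $A$- and $B$-edges along an osculating path, the Deming regression line parallel to the angle bisector, and the balance property of Sturmian words bounding the perpendicular deviation by $\frac{1}{2}(L+S)\sin\alpha$. No gaps to report.
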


In Figure~\ref{fig:bigrid}, we have our first example of a quasiperiodic bobbin lace ground.  We also note that it has some local
$D_2$ and $D_4$ dihedral symmetry.

To illustrate the significance of geometry in the conservation of threads, we present
a counterexample that is homeomorphic to both the 2D periodic lattice and the 2D Fibonacci tiling discussed above.  The topology of our counterexample obeys $C1$--$C3$ and the geometry even observes $C0$.  The interline spacing, however, does not result in well-behaved osculating paths.

In our counterexample, the spacing between south-easterly lines is specified by the bi-infinite word $W_A:=\dots S^{64}L^{16}S^{4} LL S^{4}L^{16}S^{64}\dots$ and the spacing between south-westerly lines by $W_B:=\dots L^{64}S^{16}L^{4} SS L^{4}S^{16}L^{64}\dots$.
In both sequences, the number of consecutive, equally-spaced lines (i.e. consecutive $S$s or consecutive $L$s) grows exponentially out from the center, indicated by $O$ in Figure~\ref{fig:zigzag}.  The key difference between $W_A$ and $W_B$ is that they are out of phase: when $W_A$ contains a stretch of $S$s, $W_B$ contains an equal number of $L$s.  When the osculating path travels through a region of $SL$ parallelograms, it has a net negative slope; when the path moves into a region of $LS$ parallelograms, the net slope becomes positive. The area occupied by connected parallelograms of the same type grows exponentially as we move out from $O$ causing the osculating path to travel increasingly larger distances from the ideal least squares regression line.  From a bobbin lacemaker's perspective, this means that in order to use the same set of threads from top to bottom to cover a rectangular patch with this ground, the minimum width of the patch must increase as the length of the patch increases, a contradiction to $C4$.

\vfigbegin
    % width, filename
    \def\svgwidth{\columnwidth}
    %% Creator: Inkscape inkscape 0.92.3, www.inkscape.org
%% PDF/EPS/PS + LaTeX output extension by Johan Engelen, 2010
%% Accompanies image file '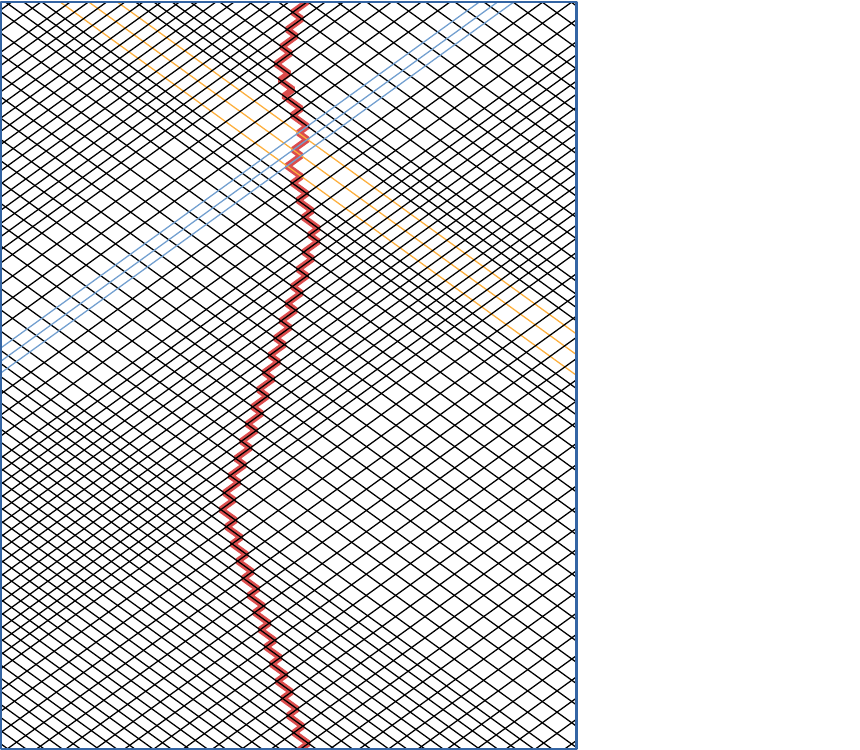' (pdf, eps, ps)
%%
%% To include the image in your LaTeX document, write
%%   \input{<filename>.pdf_tex}
%%  instead of
%%   \includegraphics{<filename>.pdf}
%% To scale the image, write
%%   \def\svgwidth{<desired width>}
%%   \input{<filename>.pdf_tex}
%%  instead of
%%   \includegraphics[width=<desired width>]{<filename>.pdf}
%%
%% Images with a different path to the parent latex file can
%% be accessed with the `import' package (which may need to be
%% installed) using
%%   \usepackage{import}
%% in the preamble, and then including the image with
%%   \import{<path to file>}{<filename>.pdf_tex}
%% Alternatively, one can specify
%%   \graphicspath{{<path to file>/}}
%% 
%% For more information, please see info/svg-inkscape on CTAN:
%%   http://tug.ctan.org/tex-archive/info/svg-inkscape
%%
\begingroup%
  \makeatletter%
  \providecommand\color[2][]{%
    \errmessage{(Inkscape) Color is used for the text in Inkscape, but the package 'color.sty' is not loaded}%
    \renewcommand\color[2][]{}%
  }%
  \providecommand\transparent[1]{%
    \errmessage{(Inkscape) Transparency is used (non-zero) for the text in Inkscape, but the package 'transparent.sty' is not loaded}%
    \renewcommand\transparent[1]{}%
  }%
  \providecommand\rotatebox[2]{#2}%
  \newcommand*\fsize{\dimexpr\f@size pt\relax}%
  \newcommand*\lineheight[1]{\fontsize{\fsize}{#1\fsize}\selectfont}%
  \ifx\svgwidth\undefined%
    \setlength{\unitlength}{247.66422844bp}%
    \ifx\svgscale\undefined%
      \relax%
    \else%
      \setlength{\unitlength}{\unitlength * \real{\svgscale}}%
    \fi%
  \else%
    \setlength{\unitlength}{\svgwidth}%
  \fi%
  \global\let\svgwidth\undefined%
  \global\let\svgscale\undefined%
  \makeatother%
  \begin{picture}(1,0.87214842)%
    \lineheight{1}%
    \setlength\tabcolsep{0pt}%
    \put(0,0){\includegraphics[width=\unitlength,page=1]{zigzag4.pdf}}%
    \put(0.32435373,0.69729483){\color[rgb]{0,0,0}\makebox(0,0)[t]{\lineheight{1.25}\smash{\begin{tabular}[t]{c}\g{$O$}\end{tabular}}}}%
    \put(0,0){\includegraphics[width=\unitlength,page=2]{zigzag4.pdf}}%
    \put(0.78618235,0.81042228){\color[rgb]{0,0,0}\makebox(0,0)[t]{\lineheight{1.25}\smash{\begin{tabular}[t]{c}\g{$O$}\end{tabular}}}}%
    \put(0,0){\includegraphics[width=\unitlength,page=3]{zigzag4.pdf}}%
  \end{picture}%
\endgroup%

\vfigend{fig:zigzag}{Counterexample: Two sets of parallel lines for which an osculating path deviates an increasing amount from the least squares regression line, calculated from the positions of the vertices.}{Counterexample}

As shown in Figure~\ref{fig:bigrid}, we have now achieved the first part of our goal: to demonstrate the existence of non-periodic bobbin lace grounds. Aesthetically however, the pattern of holes produced by two sets of parallel lines is quite simple.  In appearance, they resemble woven cloth; woven patterns derived from Sturmian words were previously introduced by Ahmed~\cite{ahmed}.
In bobbin lace, threads can travel in more than two directions which might permit more complex patterns.
In the next section we will demonstrate that the P3 tiling gives rise to a bobbin lace pattern.

\section{Lace pattern from the Penrose thin and thick rhombs}
The quasiperiodic tiling commonly known as the \emph{Penrose thick and thin rhomb} tiling (or, for brevity, the \emph{P3} tiling) was discovered independently by Roger Penrose and Robert Ammann in the late 1970s~\cite{grunbaum}.  One of its remarkable properties is that it has regions of fivefold symmetry which occur at larger and larger scales, ultimately leading to two particular configurations that have global 5-fold dihedral symmetry. This tendency toward fivefold symmetry also shows up in the Bragg diffraction pattern of P3.  Recall that global 5-fold symmetry cannot occur in a periodic pattern.

The set of prototiles used in P3 consists of two rhombs with interior angles as indicated in Figure~\ref{fig:penrose}a.  It is possible to create a periodic tiling from one or both of these prototiles, so the shapes are decorated in a way that prohibits periodic configurations. Such decorations are called \emph{local matching conditions}. For example, in Figure~\ref{fig:penrose}a, the edges are marked with arrows.  Only tile edges with the same arrow shape and direction can be placed together as shown in Figure~\ref{fig:penrose}b.

\vfigbegin
    % width, filename
    \def\svgwidth{\columnwidth}
    %% Creator: Inkscape inkscape 0.92.3, www.inkscape.org
%% PDF/EPS/PS + LaTeX output extension by Johan Engelen, 2010
%% Accompanies image file '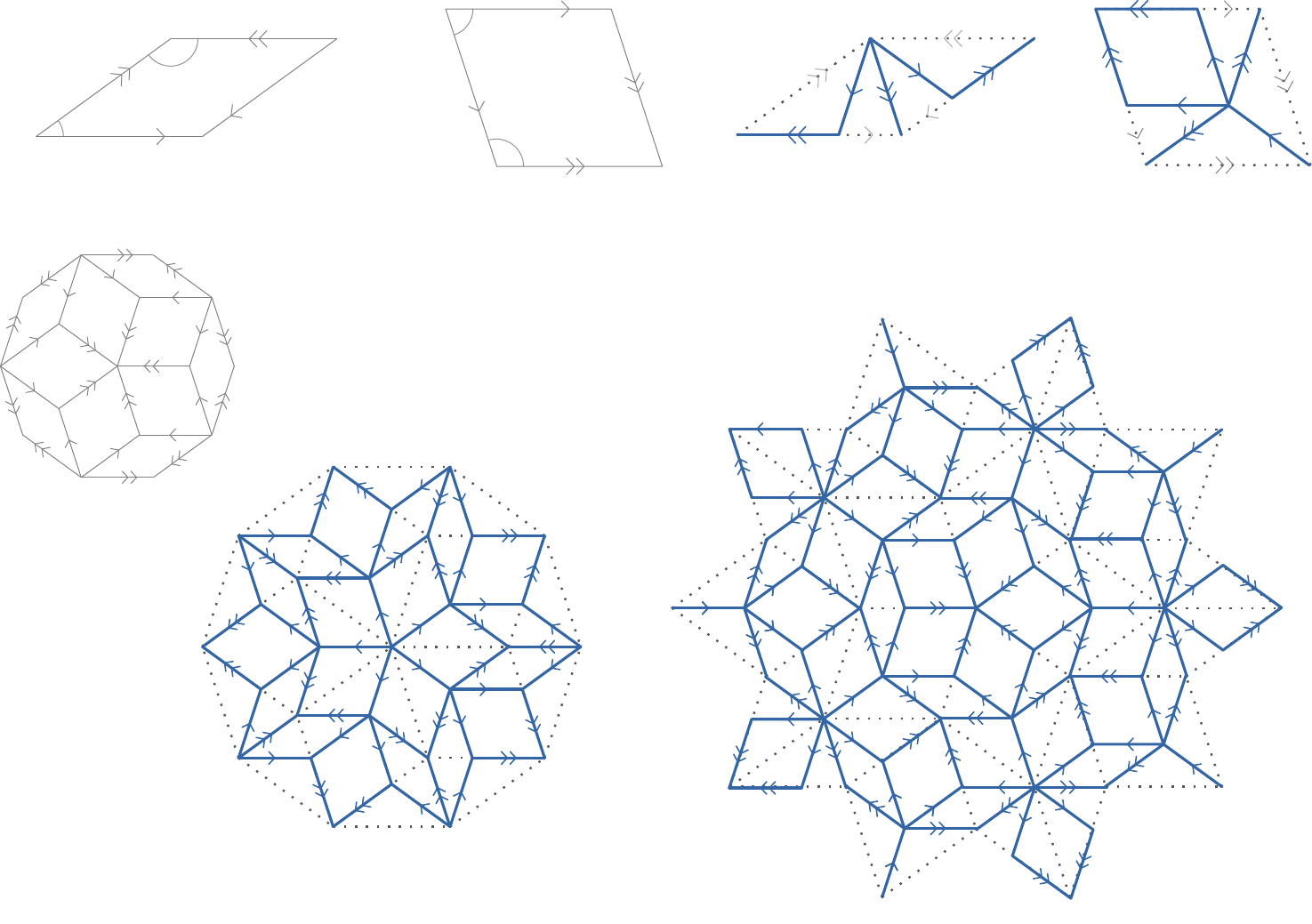' (pdf, eps, ps)
%%
%% To include the image in your LaTeX document, write
%%   \input{<filename>.pdf_tex}
%%  instead of
%%   \includegraphics{<filename>.pdf}
%% To scale the image, write
%%   \def\svgwidth{<desired width>}
%%   \input{<filename>.pdf_tex}
%%  instead of
%%   \includegraphics[width=<desired width>]{<filename>.pdf}
%%
%% Images with a different path to the parent latex file can
%% be accessed with the `import' package (which may need to be
%% installed) using
%%   \usepackage{import}
%% in the preamble, and then including the image with
%%   \import{<path to file>}{<filename>.pdf_tex}
%% Alternatively, one can specify
%%   \graphicspath{{<path to file>/}}
%% 
%% For more information, please see info/svg-inkscape on CTAN:
%%   http://tug.ctan.org/tex-archive/info/svg-inkscape
%%
\begingroup%
  \makeatletter%
  \providecommand\color[2][]{%
    \errmessage{(Inkscape) Color is used for the text in Inkscape, but the package 'color.sty' is not loaded}%
    \renewcommand\color[2][]{}%
  }%
  \providecommand\transparent[1]{%
    \errmessage{(Inkscape) Transparency is used (non-zero) for the text in Inkscape, but the package 'transparent.sty' is not loaded}%
    \renewcommand\transparent[1]{}%
  }%
  \providecommand\rotatebox[2]{#2}%
  \newcommand*\fsize{\dimexpr\f@size pt\relax}%
  \newcommand*\lineheight[1]{\fontsize{\fsize}{#1\fsize}\selectfont}%
  \ifx\svgwidth\undefined%
    \setlength{\unitlength}{424.72203827bp}%
    \ifx\svgscale\undefined%
      \relax%
    \else%
      \setlength{\unitlength}{\unitlength * \real{\svgscale}}%
    \fi%
  \else%
    \setlength{\unitlength}{\svgwidth}%
  \fi%
  \global\let\svgwidth\undefined%
  \global\let\svgscale\undefined%
  \makeatother%
  \begin{picture}(1,0.70426636)%
    \lineheight{1}%
    \setlength\tabcolsep{0pt}%
    \put(0,0){\includegraphics[width=\unitlength,page=1]{penrose3.pdf}}%
    \put(0.31133927,0.68516129){\color[rgb]{0,0,0}\makebox(0,0)[t]{\lineheight{0}\smash{\begin{tabular}[t]{c}\g{$\frac{2\pi}{5}$}\end{tabular}}}}%
    \put(0.33606374,0.57856103){\color[rgb]{0,0,0}\makebox(0,0)[t]{\lineheight{0}\smash{\begin{tabular}[t]{c}\g{$\frac{3\pi}{5}$}\end{tabular}}}}%
    \put(0.09138933,0.6716826){\color[rgb]{0,0,0}\makebox(0,0)[t]{\lineheight{0}\smash{\begin{tabular}[t]{c}\g{$\frac{4\pi}{5}$}\end{tabular}}}}%
    \put(-0.00281758,0.60256876){\color[rgb]{0,0,0}\makebox(0,0)[t]{\lineheight{0}\smash{\begin{tabular}[t]{c}\g{$\frac{\pi}{5}$}\end{tabular}}}}%
    \put(0.08644225,0.57992242){\color[rgb]{0,0,0}\makebox(0,0)[t]{\lineheight{0}\smash{\begin{tabular}[t]{c}\g{$1$}\end{tabular}}}}%
    \put(0.21627473,0.61704415){\color[rgb]{0,0,0}\makebox(0,0)[t]{\lineheight{0}\smash{\begin{tabular}[t]{c}\g{$1$}\end{tabular}}}}%
    \put(0.44724607,0.55639211){\color[rgb]{0,0,0}\makebox(0,0)[t]{\lineheight{0}\smash{\begin{tabular}[t]{c}\g{$1$}\end{tabular}}}}%
    \put(0.51775857,0.63273092){\color[rgb]{0,0,0}\makebox(0,0)[t]{\lineheight{0}\smash{\begin{tabular}[t]{c}\g{$1$}\end{tabular}}}}%
    \put(0.28685616,0.54621718){\color[rgb]{0,0,0}\makebox(0,0)[t]{\lineheight{0}\smash{\begin{tabular}[t]{c}\g{(a)}\end{tabular}}}}%
    \put(0.77312822,0.54621718){\color[rgb]{0,0,0}\makebox(0,0)[t]{\lineheight{0}\smash{\begin{tabular}[t]{c}\g{(c)}\end{tabular}}}}%
    \put(0.09671108,0.28749239){\color[rgb]{0,0,0}\makebox(0,0)[t]{\lineheight{0}\smash{\begin{tabular}[t]{c}\g{(b)}\end{tabular}}}}%
    \put(0.42941791,0.004955){\color[rgb]{0,0,0}\makebox(0,0)[t]{\lineheight{0}\smash{\begin{tabular}[t]{c}\g{(d)}\end{tabular}}}}%
  \end{picture}%
\endgroup%

\vfigend{fig:penrose}{P3 tiling: a) prototiles with matching rules, b) a patch of P3, c) deflation rules d) two steps of deflation applied to the patch in (b).}{P3 decorated rhombs}

There are several ways to construct a patch of P3.  In this paper, we will look at two methods which will provide insight into how a bobbin lace pattern can be derived from this tiling.

\textbf{Deflation and matching:} A recursive process can be used to grow a small patch of tiles into a patch of unbounded size.  It starts with a finite configuration known to appear in the pattern.  For example, for P3 it could be as simple as a single thin or thick rhomb.  At each iteration, the patch is scaled up by a factor, which for P3 is $\tau = (1+\sqrt{5})/2$,  and each tile in the patch is subdivided (deflated) into tiles or parts of tiles from the tile set. Note that the orientations of the new tiles are determined by the markings of the original tile. For P3, the subdivision of thin and thick rhombs is shown in Figures~\ref{fig:penrose}c~and~d.
This deflation can be iterated to produce a patch of any desired size, and defines a tiling of the entire plane in the limit.
A quasiperiodic tiling that can be generated using the deflation and matching technique has the property of being \emph{self-similar}.

\textbf{Generalized dual method (GDM):} This method, discovered by de Bruijn~\cite{deBruijn1981}, creates a tiling from its topological dual.   It starts with $n$ sets of equally spaced parallel lines, called an \emph{$n$-grid} or \emph{multigrid},  and a star of unit vectors, each vector being orthogonal to one of the sets of lines.
For P3, the orientation star, given by $\{\vec{e_i} = (\cos(2\pi i/5), \sin(2\pi i/5))\}_{i=1}^{5}$, is regular (angles are equal) as shown in Figure~\ref{fig:pentagrid}b.
The line sets are in \emph{general position}: that is, at any point only two lines intersect.\footnote{
For details on how to avoid degenerate intersections in the multigrid, we refer the reader to more in-depth discussions~\cite{socolar, boyle, egan}.
}
As in the previous section, we derive a planar graph embedding from the multigrid by placing a vertex at the intersection point of every pair of lines.

The next step is to construct the dual of the multigrid.  Given a primal graph embedding, its \emph{dual} is another graph embedding in which every face in the primal becomes a vertex in the dual, every vertex in the primal becomes a face in the dual, and an edge in the dual is incident to a pair of vertices that correspond to adjacent faces in the primal.
To create a quasiperiodic tiling from a multigrid, the GDM-dual is defined using the following rule:  Within a set of parallel lines, each line is assigned an integer index $k$.
Each face of the multigrid is assigned an $n$-tuple $\textbf{p}{=}(p_1,\cdots,p_{n})$ consisting of ordinal positions, one for each of the $n$ sets of lines.  That is, a face that lies between lines $k$ and $k+1$ in the $i$th set of lines will have $p_i=k$.  The position of a vertex in the dual is then given by $f(\textbf{p}){=}\sum_{i=1}^{n}p_i \vec{e_i}$.

 \begin{center}
 \begin{minipage}{\textwidth}
 \begin{center}
 % width, filename
    \def\svgwidth{0.9\columnwidth}
    %% Creator: Inkscape inkscape 0.92.4, www.inkscape.org
%% PDF/EPS/PS + LaTeX output extension by Johan Engelen, 2010
%% Accompanies image file '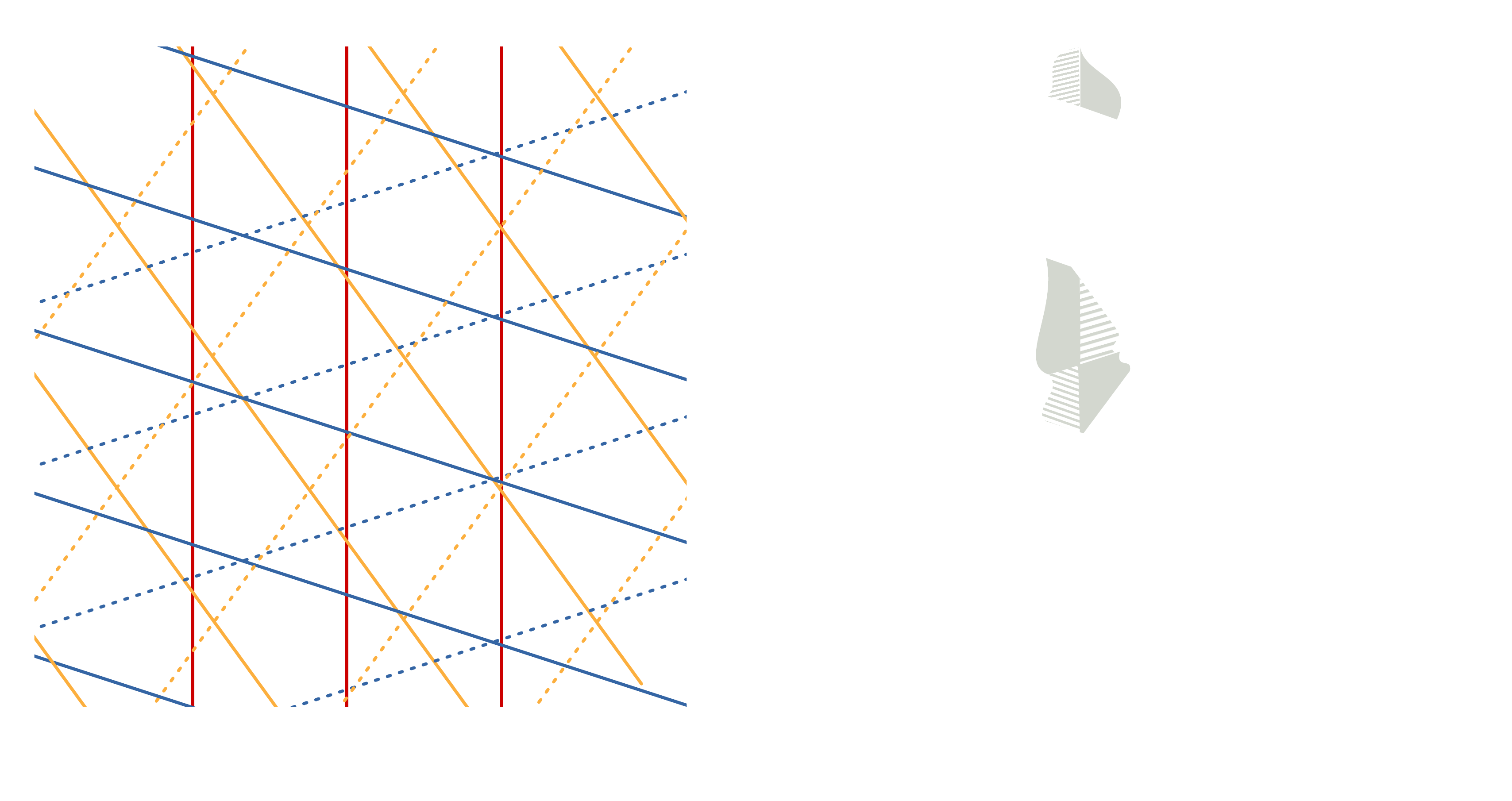' (pdf, eps, ps)
%%
%% To include the image in your LaTeX document, write
%%   \input{<filename>.pdf_tex}
%%  instead of
%%   \includegraphics{<filename>.pdf}
%% To scale the image, write
%%   \def\svgwidth{<desired width>}
%%   \input{<filename>.pdf_tex}
%%  instead of
%%   \includegraphics[width=<desired width>]{<filename>.pdf}
%%
%% Images with a different path to the parent latex file can
%% be accessed with the `import' package (which may need to be
%% installed) using
%%   \usepackage{import}
%% in the preamble, and then including the image with
%%   \import{<path to file>}{<filename>.pdf_tex}
%% Alternatively, one can specify
%%   \graphicspath{{<path to file>/}}
%% 
%% For more information, please see info/svg-inkscape on CTAN:
%%   http://tug.ctan.org/tex-archive/info/svg-inkscape
%%
\begingroup%
  \makeatletter%
  \providecommand\color[2][]{%
    \errmessage{(Inkscape) Color is used for the text in Inkscape, but the package 'color.sty' is not loaded}%
    \renewcommand\color[2][]{}%
  }%
  \providecommand\transparent[1]{%
    \errmessage{(Inkscape) Transparency is used (non-zero) for the text in Inkscape, but the package 'transparent.sty' is not loaded}%
    \renewcommand\transparent[1]{}%
  }%
  \providecommand\rotatebox[2]{#2}%
  \newcommand*\fsize{\dimexpr\f@size pt\relax}%
  \newcommand*\lineheight[1]{\fontsize{\fsize}{#1\fsize}\selectfont}%
  \ifx\svgwidth\undefined%
    \setlength{\unitlength}{983.73937225bp}%
    \ifx\svgscale\undefined%
      \relax%
    \else%
      \setlength{\unitlength}{\unitlength * \real{\svgscale}}%
    \fi%
  \else%
    \setlength{\unitlength}{\svgwidth}%
  \fi%
  \global\let\svgwidth\undefined%
  \global\let\svgscale\undefined%
  \makeatother%
  \begin{picture}(1,0.52982939)%
    \lineheight{1}%
    \setlength\tabcolsep{0pt}%
    \put(0,0){\includegraphics[width=\unitlength,page=1]{pentagrid.pdf}}%
    \put(0.23122347,0.00213927){\color[rgb]{0,0,0}\makebox(0,0)[t]{\lineheight{1.25}\smash{\begin{tabular}[t]{c}\g{(a)}\end{tabular}}}}%
    \put(0.71061123,0.00213927){\color[rgb]{0,0,0}\makebox(0,0)[t]{\lineheight{1.25}\smash{\begin{tabular}[t]{c}\g{(c)}\end{tabular}}}}%
    \put(0.94787684,0.00213927){\color[rgb]{0,0,0}\makebox(0,0)[t]{\lineheight{1.25}\smash{\begin{tabular}[t]{c}\g{(d)}\end{tabular}}}}%
    \put(0,0){\includegraphics[width=\unitlength,page=2]{pentagrid.pdf}}%
    \put(0.50167172,0.15269053){\color[rgb]{0,0,0}\makebox(0,0)[t]{\lineheight{1.25}\smash{\begin{tabular}[t]{c}\g{$\vec{e_5}$}\end{tabular}}}}%
    \put(0.53822439,0.13302425){\color[rgb]{0,0,0}\makebox(0,0)[t]{\lineheight{1.25}\smash{\begin{tabular}[t]{c}\g{$\vec{e_4}$}\end{tabular}}}}%
    \put(0.57869493,0.13314336){\color[rgb]{0,0,0}\makebox(0,0)[t]{\lineheight{1.25}\smash{\begin{tabular}[t]{c}\g{$\vec{e_3}$}\end{tabular}}}}%
    \put(0.62131894,0.15251764){\color[rgb]{0,0,0}\makebox(0,0)[t]{\lineheight{1.25}\smash{\begin{tabular}[t]{c}\g{$\vec{e_2}$}\end{tabular}}}}%
    \put(0.62150203,0.19177189){\color[rgb]{0,0,0}\makebox(0,0)[t]{\lineheight{1.25}\smash{\begin{tabular}[t]{c}\g{$\vec{e_1}$}\end{tabular}}}}%
    \put(0,0){\includegraphics[width=\unitlength,page=3]{pentagrid.pdf}}%
    \put(0.67955312,0.48602939){\color[rgb]{0,0,0}\makebox(0,0)[rt]{\lineheight{1.25}\smash{\begin{tabular}[t]{r}\gt{0}{0}{0}{0}{0}\end{tabular}}}}%
    \put(0.7411124,0.32359328){\color[rgb]{0,0,0}\makebox(0,0)[lt]{\lineheight{1.25}\smash{\begin{tabular}[t]{l}\gt{1}{1}{1}{2}{1}\end{tabular}}}}%
    \put(0.67955312,0.27456701){\color[rgb]{0,0,0}\makebox(0,0)[rt]{\lineheight{1.25}\smash{\begin{tabular}[t]{r}\gt{0}{1}{2}{2}{1}\end{tabular}}}}%
    \put(0.7411124,0.28060375){\color[rgb]{0,0,0}\makebox(0,0)[lt]{\lineheight{1.25}\smash{\begin{tabular}[t]{l}\gt{1}{1}{2}{2}{1}\end{tabular}}}}%
    \put(0.67955312,0.32581161){\color[rgb]{0,0,0}\makebox(0,0)[rt]{\lineheight{1.25}\smash{\begin{tabular}[t]{r}\gt{0}{1}{1}{2}{1}\end{tabular}}}}%
    \put(0.7411124,0.48055654){\color[rgb]{0,0,0}\makebox(0,0)[lt]{\lineheight{1.25}\smash{\begin{tabular}[t]{l}\gt{1}{0}{0}{0}{0}\end{tabular}}}}%
    \put(0,0){\includegraphics[width=\unitlength,page=4]{pentagrid.pdf}}%
    \put(0.87113023,0.51548532){\color[rgb]{0,0,0}\makebox(0,0)[t]{\lineheight{1.25}\smash{\begin{tabular}[t]{c}\gf{0}{0}{0}{0}{0}\end{tabular}}}}%
    \put(0.95299944,0.30956201){\color[rgb]{0,0,0}\makebox(0,0)[lt]{\lineheight{1.25}\smash{\begin{tabular}[t]{l}\gf{1}{1}{1}{2}{1}\end{tabular}}}}%
    \put(0.88856926,0.30640566){\color[rgb]{0,0,0}\makebox(0,0)[rt]{\lineheight{1.25}\smash{\begin{tabular}[t]{r}\gf{0}{1}{2}{2}{1}\end{tabular}}}}%
    \put(0.95604898,0.27009307){\color[rgb]{0,0,0}\makebox(0,0)[lt]{\lineheight{1.25}\smash{\begin{tabular}[t]{l}\gf{1}{1}{2}{2}{1}\end{tabular}}}}%
    \put(0.88704448,0.34715888){\color[rgb]{0,0,0}\makebox(0,0)[rt]{\lineheight{1.25}\smash{\begin{tabular}[t]{r}\gf{0}{1}{1}{2}{1}\end{tabular}}}}%
    \put(0,0){\includegraphics[width=\unitlength,page=5]{pentagrid.pdf}}%
    \put(0.97786592,0.51548532){\color[rgb]{0,0,0}\makebox(0,0)[t]{\lineheight{1.25}\smash{\begin{tabular}[t]{c}\gf{1}{0}{0}{0}{0}\end{tabular}}}}%
    \put(0,0){\includegraphics[width=\unitlength,page=6]{pentagrid.pdf}}%
    \put(0.55508764,0.00213927){\color[rgb]{0,0,0}\makebox(0,0)[t]{\lineheight{1.25}\smash{\begin{tabular}[t]{c}\g{(b)}\end{tabular}}}}%
    \put(0.17848938,0.52029347){\color[rgb]{0.8,0,0}\makebox(0,0)[t]{\lineheight{1.25}\smash{\begin{tabular}[t]{c}\g{$0$}\end{tabular}}}}%
    \put(0.28098234,0.52021903){\color[rgb]{0.8,0,0}\makebox(0,0)[t]{\lineheight{1.25}\smash{\begin{tabular}[t]{c}\g{$1$}\end{tabular}}}}%
    \put(0.39508533,0.52021903){\color[rgb]{0.8,0,0}\makebox(0,0)[t]{\lineheight{1.25}\smash{\begin{tabular}[t]{c}\g{$2$}\end{tabular}}}}%
    \put(0.0657437,0.52021903){\color[rgb]{0.8,0,0}\makebox(0,0)[t]{\lineheight{1.25}\smash{\begin{tabular}[t]{c}\g{$-1$}\end{tabular}}}}%
    \put(0,0){\includegraphics[width=\unitlength,page=7]{pentagrid.pdf}}%
    \put(0.47804692,0.32973556){\color[rgb]{0.20392157,0.39607843,0.64313725}\makebox(0,0)[t]{\lineheight{1.25}\smash{\begin{tabular}[t]{c}\g{1}\end{tabular}}}}%
    \put(0.47797255,0.2222998){\color[rgb]{0.20392157,0.39607843,0.64313725}\makebox(0,0)[t]{\lineheight{1.25}\smash{\begin{tabular}[t]{c}\g{2}\end{tabular}}}}%
    \put(0.47797255,0.11368274){\color[rgb]{0.20392157,0.39607843,0.64313725}\makebox(0,0)[t]{\lineheight{1.25}\smash{\begin{tabular}[t]{c}\g{3}\end{tabular}}}}%
    \put(0.47797255,0.43885273){\color[rgb]{0.20392157,0.39607843,0.64313725}\makebox(0,0)[t]{\lineheight{1.25}\smash{\begin{tabular}[t]{c}\g{0}\end{tabular}}}}%
    \put(0.24932933,0.03540993){\color[rgb]{0,0,0}\makebox(0,0)[t]{\lineheight{1.25}\smash{\begin{tabular}[t]{c}\g{$r$}\end{tabular}}}}%
    \put(0,0){\includegraphics[width=\unitlength,page=8]{pentagrid.pdf}}%
    \put(0.73268903,0.03540993){\color[rgb]{0,0,0}\makebox(0,0)[t]{\lineheight{1.25}\smash{\begin{tabular}[t]{c}\g{$r$}\end{tabular}}}}%
    \put(0,0){\includegraphics[width=\unitlength,page=9]{pentagrid.pdf}}%
  \end{picture}%
\endgroup%

 
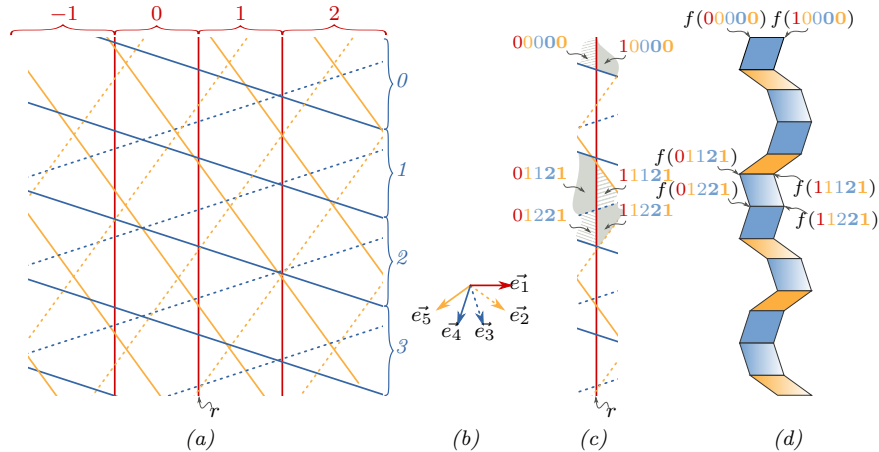
\captionof{figure}[]{Generalized dual method for P3 tiling: a) pentagrid, b) orientation star, c) intersections with a single line, d) corresponding tiles in GDM-dual.}
 \label{fig:pentagrid}
 \end{center}
 \end{minipage}
 \end{center}

%\vfigbegin
%    \imagepdftex{0.9\columnwidth}{pentagrid.pdf_tex}
%\vfigend{fig:pentagrid}{Generalized dual method for P3 tiling: a) pentagrid, b) orientation star, c) intersections with a single line, d) corresponding tiles in GDM-dual.}

The intersection of two lines in the multigrid results in a rhombus in the GDM-dual.
In Figures~\ref{fig:pentagrid}c~and~d, we demonstrate the GDM for P3 by following a single line $r$ in the 5-grid (also known as \emph{pentagrid}).
The result is a sequence of rhombi, each of which has a pair of edges perpendicular to $r$.  This sequence of rhombi is called a `stack' which we define more formally as follows:
\begin{definition}[\cite{deBruijn2013}]
Select an edge $e$ of a parallelogram $p$ in an edge-to-edge tiling by parallelograms.  In the tiling, there are two parallelograms $p'$ and $p''$ that are adjacent to $p$ and have an edge parallel to $e$.  We will call $p'$ and $p''$ the $e$-neighbours of $p$.  A chain of tiles is obtained by taking the $e$-neighbours of $p'$ and $p''$ and iteratively extending this relationship across the tiling.  The resulting bi-infinite sequence of tiles is called a \emph{stack}.
\end{definition}

Applying the GDM to a pentagrid, the set of lines parallel to $r$ will result in a set of stacks, all of which have edges perpendicular to $r$.  This collection of stacks is called a \emph{stack family}.

Two relationships in the GDM-dual are worth noting.  First, the acute and obtuse angles formed by the intersection of two lines in the multigrid correspond to the angles of either a thin rhomb or a thick rhomb in the tiling. Second, in the primal multigrid, only two lines intersect at a point. Thus, in the dual tiling, each tile belongs to exactly two stacks.

Now that we are familiar with the P3 tiling, let us turn to the question of whether it can be a pattern for bobbin lace.
The pentagrid has the right topology for bobbin lace, but its geometry does not satisfy $C0$. The P3 tiling, on the other hand, is well behaved geometrically but its graph structure is not workable. We define a kind of dual that marries the topology of the pentagrid with the geometric arrangement of P3, ultimately producing a pattern that can be rendered in lace.
This dual, which we shall call the \emph{centroid-dual} of the tiling is created by applying the following rule to the P3 tiling: a vertex in the dual graph drawing is placed at the centroid of the corresponding face in the primal tiling.
The centroid of a face is its center of gravity
which can be determined from the positions of the vertices on the boundary of the face by assuming that the face polygon has uniform density.
We will refer to the resulting graph drawing as the \emph{centroid-dual} of the tiling. An example is shown in Figure~\ref{fig:dualP3Proof}c.

To assign a direction to each edge of the centroid-dual, we note that there is a homeomorphism between the centroid-dual and the pentagrid~\cite{deBruijn2013}.  We first orient the lines of the pentagrid by choosing a vector that is not perpendicular to any set of lines and rotating the pentagrid so that this vector points up.  A down direction can be unambiguously assigned to each of the lines in the pentagrid.  Each edge of the centroid-dual is given the same direction as its topological counterpart in the pentagrid.
Note that an edge in the centroid-dual may be horizontal or even angled slightly up but its corresponding edge in the pentagrid always points down.

\begin{theorem}
\label{thm:penrosep3}
The edges of the centroid dual of the P3 tiling can be assigned directions in such a way that the resulting graph satisfies $C0$--$C4$. Therefore, the centroid dual of the P3 tiling is a workable bobbin lace pattern.
\end{theorem}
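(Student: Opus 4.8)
The plan is to verify the six conditions $C0$--$C4$ in turn, grouping the three ``topological'' conditions ($C1$, $C2$, $C3$) together, then the metric condition $C0$, and finally the thread-conservation condition $C4$, which I expect to be the real work. Throughout I would lean on the homeomorphism between the centroid-dual and the pentagrid: the conditions that depend only on combinatorics and orientation can be read off from the pentagrid, whereas $C0$ and $C4$ must be checked in the actual centroid geometry, since the pentagrid itself fails $C0$.

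For $C1$--$C3$: each rhomb of an edge-to-edge tiling shares each of its four edges with exactly one neighbour, so every centroid-dual vertex has degree four; equivalently, every pentagrid vertex is the transverse crossing of exactly two lines. Orienting the pentagrid downward splits the four edges at a crossing into two upper (incoming) and two lower (outgoing) segments, with the two outgoing edges rotationally consecutive, giving $C1$. Infiniteness and connectivity ($C2$) are immediate from the edge-adjacency of a tiling of the whole plane. For $C3$, every pentagrid edge decreases the chosen height coordinate, so no directed cycle can close up, and this transfers to the centroid-dual because the two directed graphs are isomorphic; simplicity (no loops, no bigons, faces of degree at least three) follows from the lines being in general position, so that two crossings are joined by at most one segment and every cell of the arrangement has at least three sides.

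For $C0$: the P3 tiling uses only two prototiles and, up to congruence, only finitely many vertex stars. Hence the centroid-dual faces, each the polygon of centroids surrounding one P3-vertex, realize only finitely many shapes, all nondegenerate because at least three rhombs meet at every vertex; this supplies the inner and outer radii $r < R$ of uniform boundedness. For the Delone property, all rhombs have unit edge length and area bounded below, so their centroids are uniformly separated (uniform discreteness), while every point of the plane lies in some rhomb of bounded diameter and is therefore within a bounded distance of a centroid (relative density).

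The substance of the theorem is $C4$. As noted after the statement of the conditions, $C1$ together with the rotationally consecutive ordering forces a \emph{unique} partition of $\Gamma(G)$ into osculating paths, so it suffices to show that each such path is well-behaved. Under the downward orientation the forced partition pairs each incoming edge of a vertex with its rotationally adjacent outgoing edge; geometrically, each osculating path, at every rhomb it visits, enters across one of the two upper edges and leaves across the adjacent lower edge, i.e.\ it ``cuts a corner'' and proceeds monotonically downward in the topological sense (an individual edge may still be horizontal or tilted slightly upward, exactly as in the directed pentagrid). I would then choose $\ell$ to be the asymptotic direction of the path and bound the perpendicular deviation of the visited centroids from $\ell$. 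In the degenerate two-direction case this is precisely the bigrid situation already analysed: the path strictly alternates between the two line families and the balancedness of the Sturmian spacing keeps it within $\frac{1}{2}(L+S)\sin\alpha$ of the bisector. The genuinely new difficulty is that an osculating path in the pentagrid need not alternate between a fixed pair of directions: at each crossing it switches to whichever second line passes through, so its steps range over all five edge directions of P3. The key estimate to establish is therefore that the signed transverse displacement cannot accumulate as the path is extended. I would attack this by grouping the steps according to the pentagrid line family carrying them and invoking the balanced (Fibonacci) spacing within each family of parallel P3-edges, or equivalently by lifting the path to the cut-and-project description of P3 and using the bounded width of the projection window to bound the deviation uniformly; orienting the drawing so that $\ell$ is vertical then completes $C4$. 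Proving this no-drift estimate is where I expect essentially all of the difficulty to lie.
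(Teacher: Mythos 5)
Your treatment of $C1$--$C3$ matches the paper's (read the combinatorics and orientation off the pentagrid via the homeomorphism), and your $C0$ argument is essentially the paper's as well, though your claim that bounded-below area forces uniformly separated centroids is loose --- the paper instead observes that the incircle of a rhomb is centred at its centroid and that incircles of distinct tiles cannot overlap, which gives the $2r$ separation cleanly; uniform boundedness of the dual faces comes, as you say, from de Bruijn's finite list of vertex configurations.

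The gap is in $C4$, and you have in effect flagged it yourself: you reduce the theorem to a ``no-drift estimate'' for an arbitrary osculating path and then do not prove it. This is not a routine omission. The paper's own counterexample (the $\dots S^{64}L^{16}S^4LLS^4L^{16}S^{64}\dots$ bigrid) shows that $C0$--$C3$ together with uniformly bounded tiles do \emph{not} imply well-behaved osculating paths, so some genuine geometric input specific to P3 is required, and neither of your two suggested routes supplies it as stated. Grouping steps by line family and invoking ``balanced Fibonacci spacing within each family of parallel P3-edges'' conflates two different structures: the pentagrid lines are \emph{periodically} (equally) spaced, and the relevant balancing mechanism is the alternation condition --- intersections of the paired families $A_i$ and $A_i'$ (at angles $\pm\theta_i$ to a fixed line) must alternate along that line, so left-tilting and right-tilting rhombs in a stack cancel. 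The Fibonacci spacing belongs to the Ammann bars, not to the pentagrid. The cut-and-project window bounds the deviation of \emph{vertices} from the physical plane, but does not by itself show that the specific vertex sequence selected by an osculating path has a linear asymptotic direction with bounded transverse error. The paper avoids proving a drift bound for every path directly: it (i) proves that \emph{stacks} of a GDM tiling with a regular orientation star and equally spaced grids are well-behaved (via the alternation condition); (ii) checks, by exhaustion over de Bruijn's seven central tile configurations, that every stack of $T$ contains an osculating path of the centroid-dual that never leaves the stack; (iii) proves stacks of a stack family are uniformly distributed; and (iv) concludes that every remaining osculating path is sandwiched between two well-behaved ones and is therefore well-behaved. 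Some device of this kind --- a uniformly distributed family of paths whose good behaviour you can actually certify, plus a sandwiching step --- is what your outline is missing.
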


\begin{proof}
$C0$: All quasiperiodic patterns that can be created by the deflation and matching method, such as the P3 tiling, have vertices that form a Delone point set~\cite{senechal2008}. Therefore, there exists a distance $d$ such that every pair of vertices in P3 is separated by a distance at least $2d$ and a distance $D$ such that every circle of radius $D$ contains at least one point.
The two prototiles of P3, shown in Figure~\ref{thm:penrosep3}, are uniformly bounded.  Therefore there exists a largest incircle of radius $r$ and a smallest containing circle of radius $R$.
P3 satisfies $C0$, but we need to prove that this is also true of the centroid-dual of P3.

First, let us prove that the vertices of the centroid-dual of P3 form a Delone point set.
We observe that the incircle of a rhomb is centered at its centroid.
The incircles of any pair of rhombs in the tiling cannot overlap, therefore, the distance between any pair of vertices in the centroid-dual is at least $2r$.
In the primal graph, a circle of radius $D$ contains at least one vertex $v$.  This vertex is incident to a face $f$ which is bounded by a circle of radius $R$.  The centroid of $f$ is at most a distance $R$ from $v$.  A circle of radius $D+R$ will therefore contain at least one vertex of the centroid-dual.

Now we must prove that the prototiles of the centroid-dual of P3 are uniformly bounded.  De Bruijn showed that there are eight distinct vertex configurations in P3, as shown in Figure~\ref{fig:dualP3C0}a~\cite{deBruijn1981}.  Each face in the centroid-dual maps to a vertex in the primal. The centroid of a rhomb is located at the intersection of its diagonals.  Therefore, we can derive the set of dual prototiles directly from the primal vertex configurations, as shown in Figure~\ref{fig:dualP3C0}b.  We conclude that the centroid-dual tiling is 7-hedral and therefore uniformly bounded.

 \begin{center}
 \begin{minipage}{\textwidth}
 \begin{center}
 % width, filename
    \def\svgwidth{\columnwidth}
    %% Creator: Inkscape inkscape 0.92.4, www.inkscape.org
%% PDF/EPS/PS + LaTeX output extension by Johan Engelen, 2010
%% Accompanies image file '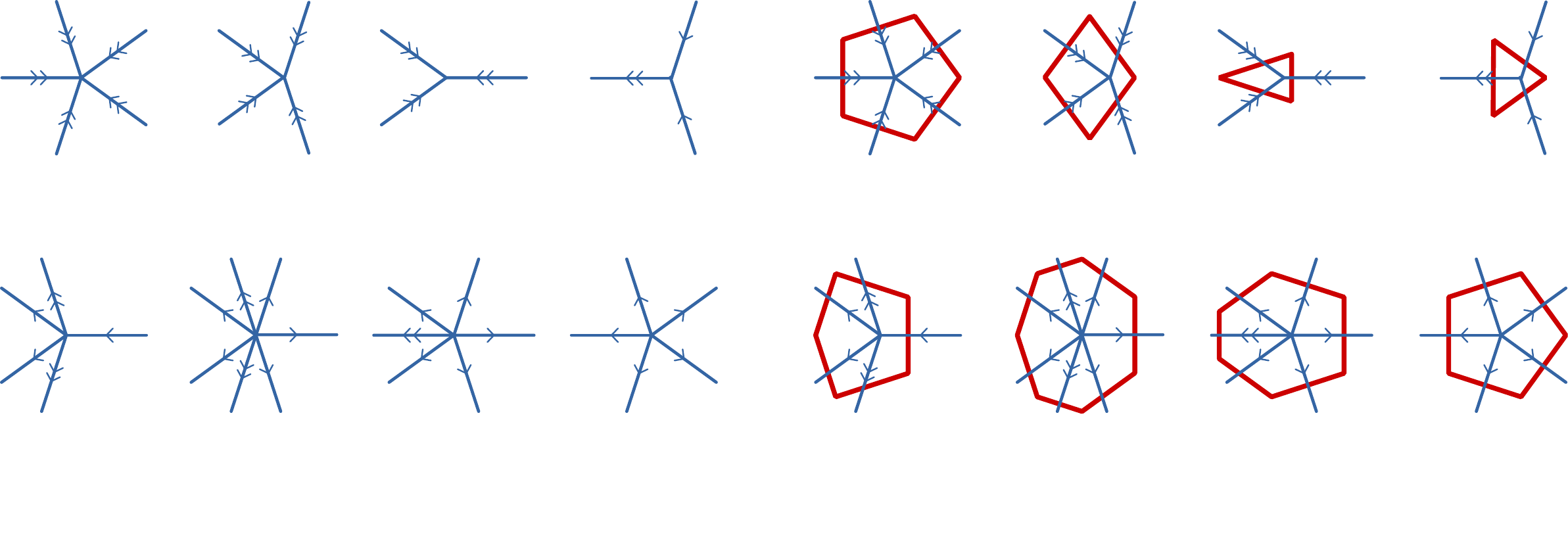' (pdf, eps, ps)
%%
%% To include the image in your LaTeX document, write
%%   \input{<filename>.pdf_tex}
%%  instead of
%%   \includegraphics{<filename>.pdf}
%% To scale the image, write
%%   \def\svgwidth{<desired width>}
%%   \input{<filename>.pdf_tex}
%%  instead of
%%   \includegraphics[width=<desired width>]{<filename>.pdf}
%%
%% Images with a different path to the parent latex file can
%% be accessed with the `import' package (which may need to be
%% installed) using
%%   \usepackage{import}
%% in the preamble, and then including the image with
%%   \import{<path to file>}{<filename>.pdf_tex}
%% Alternatively, one can specify
%%   \graphicspath{{<path to file>/}}
%% 
%% For more information, please see info/svg-inkscape on CTAN:
%%   http://tug.ctan.org/tex-archive/info/svg-inkscape
%%
\begingroup%
  \makeatletter%
  \providecommand\color[2][]{%
    \errmessage{(Inkscape) Color is used for the text in Inkscape, but the package 'color.sty' is not loaded}%
    \renewcommand\color[2][]{}%
  }%
  \providecommand\transparent[1]{%
    \errmessage{(Inkscape) Transparency is used (non-zero) for the text in Inkscape, but the package 'transparent.sty' is not loaded}%
    \renewcommand\transparent[1]{}%
  }%
  \providecommand\rotatebox[2]{#2}%
  \newcommand*\fsize{\dimexpr\f@size pt\relax}%
  \newcommand*\lineheight[1]{\fontsize{\fsize}{#1\fsize}\selectfont}%
  \ifx\svgwidth\undefined%
    \setlength{\unitlength}{686.80391693bp}%
    \ifx\svgscale\undefined%
      \relax%
    \else%
      \setlength{\unitlength}{\unitlength * \real{\svgscale}}%
    \fi%
  \else%
    \setlength{\unitlength}{\svgwidth}%
  \fi%
  \global\let\svgwidth\undefined%
  \global\let\svgscale\undefined%
  \makeatother%
  \begin{picture}(1,0.34851947)%
    \lineheight{1}%
    \setlength\tabcolsep{0pt}%
    \put(0,0){\includegraphics[width=\unitlength,page=1]{centroid_dual_p3.pdf}}%
    \put(0.24401248,0.00367703){\color[rgb]{0,0,0}\makebox(0,0)[t]{\lineheight{1.25}\smash{\begin{tabular}[t]{c}\g{(a)}\end{tabular}}}}%
    \put(0.74633912,0.00367703){\color[rgb]{0,0,0}\makebox(0,0)[t]{\lineheight{1.25}\smash{\begin{tabular}[t]{c}\g{(b)}\end{tabular}}}}%
  \end{picture}%
\endgroup%

 \captionof{figure}[]{Demonstrating that the centroid-dual tiling of P3 is 7-hedral: (a) eight vertex configurations of P3, (b) prototiles of centroid-dual in bold red.}
 \label{fig:dualP3C0}
 \end{center}
 \end{minipage}
 \end{center}

$C1$: Every face in the primal graph drawing is a rhomb and therefore every vertex in the dual graph drawing has degree 4.
Each vertex in the centroid-dual is homeomorphic to the intersection of two directed lines in the pentagrid.  Therefore, by homeomorphism, each vertex in the dual graph has in-degree~2 and out-degree~2.

$C2$: The P3 tiling is planar, as is its dual.  Every vertex in an edge-to-edge tiling by rhombs has degree 3 or more, therefore every face in the dual also has degree at~least~3.

$C3$:
The pentagrid was rotated so that all of its line sets are directed downward.  Since it has no upward edges, the pentagrid has no directed cycles.  By homeomorphism, the same can be said of the centroid-dual drawing.

The proof of $C4$ requires a few steps.  First we will prove that a stack in P3 is well-behaved.
A stack is well-behaved if there exists a line $\ell$ such that every tile vertex in the stack is within a finite distance $s$ from $\ell$.
We will then show that in P3, every stack in a stack family contains an osculating path from the centroid-dual of P3 that travels strictly within the stack.  Finally, we will show that the stacks are uniformly distributed and therefore any osculating path in the centroid-dual that is not contained within a stack is sandwiched between well-behaved osculating paths and is therefore also well-behaved.

We start by demonstrating that a stack in P3 is well-behaved.  We phrase the lemma in more general terms for future use.
\begin{lemma}
Let $T$ be a quasiperiodic tiling generated by applying the generalized dual method to a multigrid consisting of $n$ sets of equally spaced lines with a regular orientation star.  The stacks of $T$ are well-behaved.
\label{thm:wellstack}
\end{lemma}
\begin{proof}
Select a line $k$ from the multigrid.  For ease of explanation, let us assume that $k$ is a vertical line belonging to $A_0$, the set of vertical lines.  See, for example, the vertical red line in Figure~\ref{fig:pentagrid}c.
Because the orientation star is regular, the remaining sets of parallel lines can be paired up with set $A_i$ being the set that intersects $k$ at angle $\theta_i$ and $A_i'$ being the set that intersects line $k$ at angle $-\theta_i$, for $1 \le i \le \floor{n/2}$.
When $n$ is even, one set, $A_{\ceil{n/2}}$ will be horizontal.

The translational spacing between lines within the set is the same for all sets, therefore, the distance between the points where lines in set $A_i$ intersect line $k$ is equal to the distance between the points where lines of set $A_i'$ intersect line $k$.
The subsequence formed by the intersections of $A_i$ and $A_i'$ with line $k$ must therefore alternate  (one set of intersections cannot `catch up and pass' the other).  This alternation between lines with opposite angle is known as the alternation condition and applies for all values of $i$.
The rhombs in the GDM-dual resulting from an intersection of line $k$ and any line from $A_i$ tilt to the left of the line $k$, making the stack bend to the left.  Rhombs resulting from an intersection of line $k$ with a line from $A_i'$ tilt to the right with the same magnitude, making the stack bend an equal amount to the right.  Because the rhombs in the stack alternate between $A_i$ and $A_i'$, the amount of bend left or right is balanced making the stack well-behaved.
When the number of non-vertical sets is odd, the orthogonal intersection of a line from $A_{\ceil{n/2}}$ with line $k$ corresponds to a square that is edge aligned with line $k$ and therefore has no bending effect on the stack.
\end{proof}

Now we will show that each stack in P3 contains an osculating path and, by inclusion, the path is therefore also well-behaved.
\begin{lemma}
Let $T$ be a P3 tiling and let $T'$ be a deflation of $T$ according to the deflation and matching method.
For each stack $s$ of $T$, there exists a path in the osculating partition of the centroid-dual of $T'$ that does not cross the boundary edges of $s$.
\end{lemma}
\begin{proof}
Proof by exhaustion: We will consider each of the tile configurations that can occur in a stack of $T$ (up to rotation by $\pi$, vertical or horizontal reflection) and demonstrate that each configuration includes a subpath of an osculating path in the centroid-dual of $T'$. Further, when two configurations are joined together, their osculating subpaths connect.

Once again, from de Bruijn's extensive analysis of P3, we know that there are seven arrangements of tiles to consider (see Figure~\ref{fig:dualP3Proof}a).
\begin{lemma}\cite{deBruijn2013}
A \emph{central tile configuration} is a central tile and its four direct neighbours, each neighbour having one edge in common with the central tile.   For P3, there exist seven distinct central tile configurations up to rotation.  Further, the matching conditions can be applied to each configuration in only one way.
\end{lemma}

Let $t$ be the central tile of a configuration.  For convenience of discussion, we will assume that the edge $e$ which defines the stack $s$ is horizontal. In Figure~\ref{fig:dualP3Proof}a, we have highlighted one stack through the central tile of each central tile configuration in white.  Every tile belongs to two stacks; however, choosing the alternate stack $s'$ in each configuration and orienting $s'$ so the defining edge is horizontal will result in the same set of drawings up to reflection in a vertical mirror.

Within $t$, select the edges of the centroid-dual that extend from a horizontal edge of $t$ to a non-horizontal edge of $t$ or vice versa, see Figure~\ref{fig:dualP3Proof}d. We observe that in all seven central tile configurations these edges exist and connect to form a direct path from the top of $t$ to the bottom of $t$.  Further, the tile preceding $t$ in $s$ and the tile following $t$ in $s$ also have such a path and these three paths are all connected.
The path thus formed, contained within the boundary of $s$, does not transversely cross a pair of edges in the centroid-dual drawing, so it is a subpath of an osculating path.  Under rotation by $\pi$ or reflection in the horizontal or vertical direction, the selected edges remain unchanged.
The continuation in preceding and succeeding tiles proves that the subpaths connect to form an infinite path.

\vfigbegin
    % width, filename
    \def\svgwidth{\columnwidth}
    %% Creator: Inkscape inkscape 0.92.3, www.inkscape.org
%% PDF/EPS/PS + LaTeX output extension by Johan Engelen, 2010
%% Accompanies image file '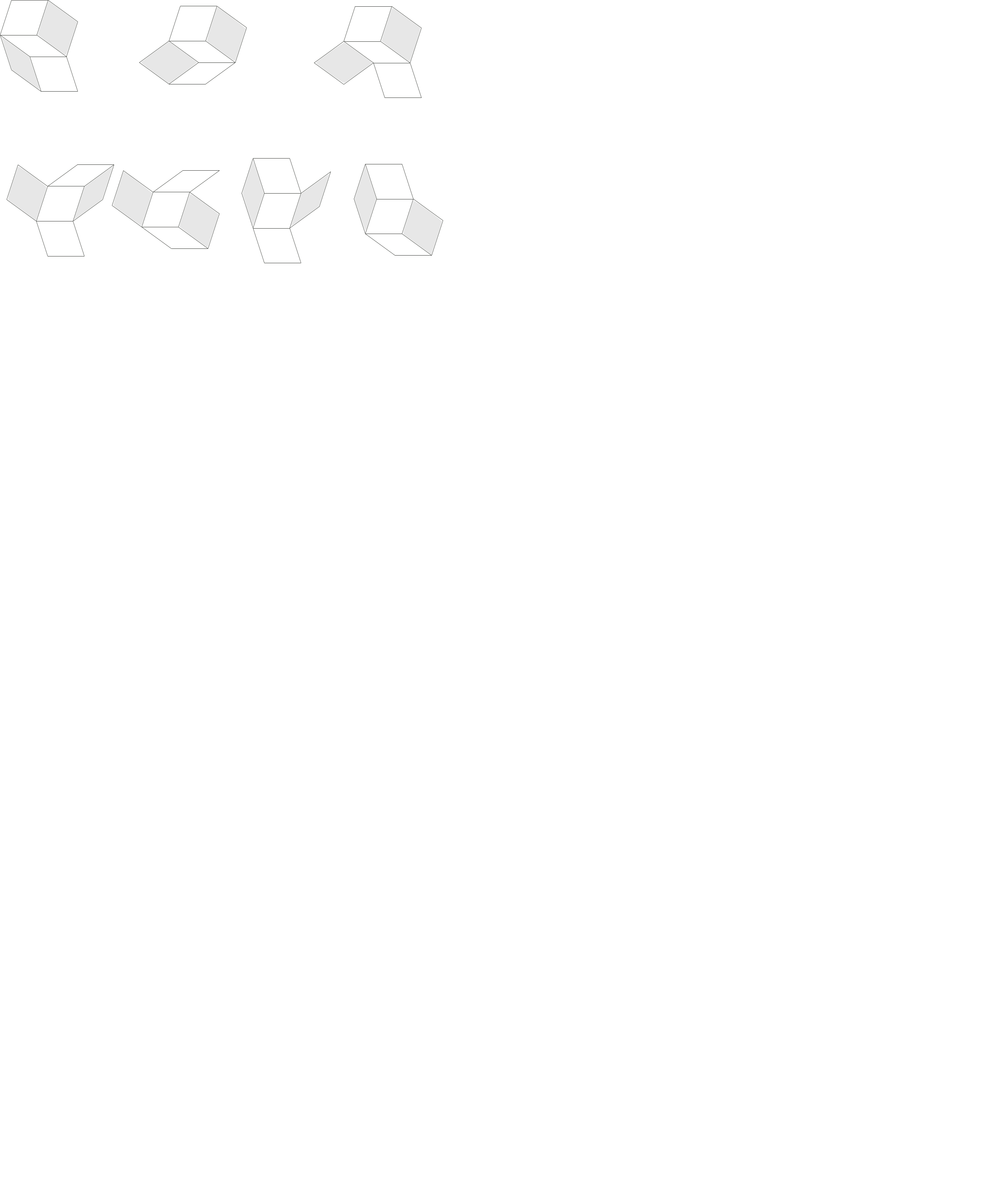' (pdf, eps, ps)
%%
%% To include the image in your LaTeX document, write
%%   \input{<filename>.pdf_tex}
%%  instead of
%%   \includegraphics{<filename>.pdf}
%% To scale the image, write
%%   \def\svgwidth{<desired width>}
%%   \input{<filename>.pdf_tex}
%%  instead of
%%   \includegraphics[width=<desired width>]{<filename>.pdf}
%%
%% Images with a different path to the parent latex file can
%% be accessed with the `import' package (which may need to be
%% installed) using
%%   \usepackage{import}
%% in the preamble, and then including the image with
%%   \import{<path to file>}{<filename>.pdf_tex}
%% Alternatively, one can specify
%%   \graphicspath{{<path to file>/}}
%% 
%% For more information, please see info/svg-inkscape on CTAN:
%%   http://tug.ctan.org/tex-archive/info/svg-inkscape
%%
\begingroup%
  \makeatletter%
  \providecommand\color[2][]{%
    \errmessage{(Inkscape) Color is used for the text in Inkscape, but the package 'color.sty' is not loaded}%
    \renewcommand\color[2][]{}%
  }%
  \providecommand\transparent[1]{%
    \errmessage{(Inkscape) Transparency is used (non-zero) for the text in Inkscape, but the package 'transparent.sty' is not loaded}%
    \renewcommand\transparent[1]{}%
  }%
  \providecommand\rotatebox[2]{#2}%
  \newcommand*\fsize{\dimexpr\f@size pt\relax}%
  \newcommand*\lineheight[1]{\fontsize{\fsize}{#1\fsize}\selectfont}%
  \ifx\svgwidth\undefined%
    \setlength{\unitlength}{646.49061584bp}%
    \ifx\svgscale\undefined%
      \relax%
    \else%
      \setlength{\unitlength}{\unitlength * \real{\svgscale}}%
    \fi%
  \else%
    \setlength{\unitlength}{\svgwidth}%
  \fi%
  \global\let\svgwidth\undefined%
  \global\let\svgscale\undefined%
  \makeatother%
  \begin{picture}(1,1.1925576)%
    \lineheight{1}%
    \setlength\tabcolsep{0pt}%
    \put(0,0){\includegraphics[width=\unitlength,page=1]{dualP3_proof3.pdf}}%
    \put(0.23691031,0.87575595){\color[rgb]{0,0,0}\makebox(0,0)[t]{\lineheight{1.25}\smash{\begin{tabular}[t]{c}\g{(a)}\end{tabular}}}}%
    \put(0.79535695,0.01889007){\color[rgb]{0,0,0}\makebox(0,0)[t]{\lineheight{1.25}\smash{\begin{tabular}[t]{c}\g{(d)}\end{tabular}}}}%
    \put(0,0){\includegraphics[width=\unitlength,page=2]{dualP3_proof3.pdf}}%
    \put(0.23690585,0.00393755){\color[rgb]{0,0,0}\makebox(0,0)[t]{\lineheight{1.25}\smash{\begin{tabular}[t]{c}\g{(e)}\end{tabular}}}}%
    \put(0.04029573,1.06644359){\color[rgb]{0,0,0}\makebox(0,0)[lt]{\lineheight{1.25}\smash{\begin{tabular}[t]{l}\g{$dB1$}\end{tabular}}}}%
    \put(0.35432715,1.06644359){\color[rgb]{0,0,0}\makebox(0,0)[lt]{\lineheight{1.25}\smash{\begin{tabular}[t]{l}\g{$dB3$}\end{tabular}}}}%
    \put(0.19792092,1.06644359){\color[rgb]{0,0,0}\makebox(0,0)[lt]{\lineheight{1.25}\smash{\begin{tabular}[t]{l}\g{$dB2$}\end{tabular}}}}%
    \put(0.3960844,0.90806554){\color[rgb]{0,0,0}\makebox(0,0)[lt]{\lineheight{1.25}\smash{\begin{tabular}[t]{l}\g{$dB7$}\end{tabular}}}}%
    \put(0.05256457,0.90846206){\color[rgb]{0,0,0}\makebox(0,0)[lt]{\lineheight{1.25}\smash{\begin{tabular}[t]{l}\g{$dB4$}\end{tabular}}}}%
    \put(0.1644752,0.90846206){\color[rgb]{0,0,0}\makebox(0,0)[lt]{\lineheight{1.25}\smash{\begin{tabular}[t]{l}\g{$dB5$}\end{tabular}}}}%
    \put(0.28417363,0.90806554){\color[rgb]{0,0,0}\makebox(0,0)[lt]{\lineheight{1.25}\smash{\begin{tabular}[t]{l}\g{$dB6$}\end{tabular}}}}%
    \put(0,0){\includegraphics[width=\unitlength,page=3]{dualP3_proof3.pdf}}%
    \put(0.19792092,0.54150103){\color[rgb]{0,0,0}\makebox(0,0)[lt]{\lineheight{1.25}\smash{\begin{tabular}[t]{l}\g{$dB1$}\end{tabular}}}}%
    \put(0,0){\includegraphics[width=\unitlength,page=4]{dualP3_proof3.pdf}}%
    \put(0.11924249,0.5092049){\color[rgb]{0,0,0}\makebox(0,0)[t]{\lineheight{1.25}\smash{\begin{tabular}[t]{c}\g{(b)}\end{tabular}}}}%
    \put(0.40279484,0.51306815){\color[rgb]{0,0,0}\makebox(0,0)[t]{\lineheight{1.25}\smash{\begin{tabular}[t]{c}\g{(c)}\end{tabular}}}}%
    \put(0,0){\includegraphics[width=\unitlength,page=5]{dualP3_proof3.pdf}}%
    \put(0.64490244,0.96123236){\color[rgb]{0,0,0}\makebox(0,0)[lt]{\lineheight{1.25}\smash{\begin{tabular}[t]{l}\g{$dB1$}\end{tabular}}}}%
    \put(0,0){\includegraphics[width=\unitlength,page=6]{dualP3_proof3.pdf}}%
    \put(0.64490246,0.68033346){\color[rgb]{0,0,0}\makebox(0,0)[lt]{\lineheight{1.25}\smash{\begin{tabular}[t]{l}\g{$dB3$}\end{tabular}}}}%
    \put(0,0){\includegraphics[width=\unitlength,page=7]{dualP3_proof3.pdf}}%
    \put(0.86893075,0.82278493){\color[rgb]{0,0,0}\makebox(0,0)[lt]{\lineheight{1.25}\smash{\begin{tabular}[t]{l}\g{$dB2$}\end{tabular}}}}%
    \put(0,0){\includegraphics[width=\unitlength,page=8]{dualP3_proof3.pdf}}%
    \put(0.6449024,0.10159449){\color[rgb]{0,0,0}\makebox(0,0)[lt]{\lineheight{1.25}\smash{\begin{tabular}[t]{l}\g{$dB7$}\end{tabular}}}}%
    \put(0,0){\includegraphics[width=\unitlength,page=9]{dualP3_proof3.pdf}}%
    \put(0.86143217,0.53200781){\color[rgb]{0,0,0}\makebox(0,0)[lt]{\lineheight{1.25}\smash{\begin{tabular}[t]{l}\g{$dB4$}\end{tabular}}}}%
    \put(0,0){\includegraphics[width=\unitlength,page=10]{dualP3_proof3.pdf}}%
    \put(0.6449024,0.39370637){\color[rgb]{0,0,0}\makebox(0,0)[lt]{\lineheight{1.25}\smash{\begin{tabular}[t]{l}\g{$dB5$}\end{tabular}}}}%
    \put(0,0){\includegraphics[width=\unitlength,page=11]{dualP3_proof3.pdf}}%
    \put(0.86143218,0.24041235){\color[rgb]{0,0,0}\makebox(0,0)[lt]{\lineheight{1.25}\smash{\begin{tabular}[t]{l}\g{$dB6$}\end{tabular}}}}%
  \end{picture}%
\endgroup%

\vfigend{fig:dualP3Proof}{Proving of $C4$ for centroid-dual of P3, white tiles indicate stacks: a) 7 central tile configurations of P3 tiling $T$, b) Deflation of configurations in (a) c) centroid-dual of (b) d) overlay of (a) and (c) e) a patch of P3 overlaid with osculating partition of centroid-dual. }{Proof of C4 for centroid-dual of P3}
\end{proof}

Finally we show that the stacks of P3 are uniformly distributed throughout the tiling
and therefore there is also a uniform distribution of well-behaved paths from the osculating partition of the centroid-dual.
\begin{lemma}
\label{thm:uniformstack}
In P3, the stacks of a stack family are uniformly distributed throughout the tiling.
\end{lemma}
\begin{proof}
Let $S_e$ be the stack family defined by all edges parallel to $e$.
Select a tile $t$ from one stack $s$ in $S_e$.  A second stack $r$ defined by edge $f$ also passes through $t$ and intersects not only with $s$ but with all stacks $S_e$.
The tiles at these intersection points have two sides parallel to $e$ and two sides parallel to $f$ and therefore are all translations of $t$ along $r$.  We know from the generalized dual method that these congruent tiles
are dual to a periodic set of lines intersecting with the line dual to $r$.  They are therefore uniformly distributed along $r$.  We can therefore conclude that the stacks of $S_e$ are uniformly distributed in the tiling.
\end{proof}

By Lemmas~\ref{thm:wellstack}~and~\ref{thm:uniformstack}, we have shown that there exists a set $W$ of uniformly distributed, well-behaved osculating paths in the oriented centroid-dual of the P3 tiling.  The set $W$ is a subset of the osculating partition of the tiling.  However, any osculating path not in $W$ is sandwiched between two paths that are in $W$ and must therefore also be well-behaved.  This concludes the proof of $C4$.
\end{proof}

We have demonstrated that the P3 tiling gives rise to a lace pattern.  Based on empirical exploration, we conjecture that other self-similar quasiperiodic tilings, such as the Ammann-Beenker tiling used to create the lace in Figure~\ref{fig:lace2}, will also produce lace patterns.

\begin{conjecture}
Let $T$ be a quasiperiodic tiling by edge-to-edge rhombs created by applying the generalized dual method to sets of periodically spaced parallel lines and let $T'$ be the centroid-dual of $T$.  For all $T$ there exists a set of edge directions of $T'$ that satisfies $C0$--$C4$  and thus forms a workable bobbin lace pattern.
\end{conjecture}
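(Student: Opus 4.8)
The plan is to follow the architecture of the proof of Theorem~\ref{thm:penrosep3}, since conditions $C1$--$C3$ transfer with essentially no change. For $C1$, every face of $T$ is the dual of a single transverse intersection of two multigrid lines and is therefore a rhomb, so every vertex of $T'$ has degree four; the homeomorphism between $T'$ and the multigrid (valid for any GDM tiling) carries the downward orientation of the lines to an in-degree-two, out-degree-two orientation. The downward orientation exists because a regular star has only finitely many line directions, so a vector perpendicular to none of them exists. For $C2$, $T'$ is planar and every vertex of an edge-to-edge rhombic tiling has degree at least three, so every face of $T'$ has degree at least three; for $C3$, the oriented multigrid has no upward edge and hence no directed cycle, and the homeomorphism transports this to $T'$. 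None of these steps uses anything beyond ``GDM tiling by rhombs from finitely many periodic line families,'' so they generalise verbatim.

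For $C0$ I would argue exactly as for P3. The incircle of each rhomb is centred at its centroid and these incircles are pairwise disjoint; since there is a least incircle radius $r>0$, the centroids are pairwise at distance at least $2r$ and hence uniformly discrete, while relative density of the primal vertices together with the bound $R$ on face radius yields relative density of the centroids. The one point needing genuine attention is uniform boundedness of $T'$, which amounts to $T'$ being $m$-hedral for some finite $m$, i.e.\ to $T$ having only finitely many vertex configurations. I would establish this \emph{finite local complexity} by the de~Bruijn-style observation that, for a regular star, there are finitely many unordered pairs of line directions (hence finitely many rhomb shapes) and only finitely many ways the alternation condition lets lines meet at a common dual vertex; the count grows with $n$ but stays finite.

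The substance lies in $C4$, where two of the three supporting lemmas already hold in sufficient generality \emph{provided} the orientation star is regular. Lemma~\ref{thm:wellstack} is stated for an arbitrary $n$-grid with a regular star, so every stack of $T$ is immediately well-behaved, and Lemma~\ref{thm:uniformstack} is general because its proof uses only that a transverse stack meets a stack family $S_e$ in tiles dual to the periodically spaced intersections of one multigrid line with one family. If the star is not regular, the left/right bend-balancing inside Lemma~\ref{thm:wellstack} is no longer automatic, so a preliminary task is either to read the conjecture as restricted to regular stars or to extend the balancing argument, relating it to the one-dimensional balance exploited for the bigrid in Section~\ref{sec:two}. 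Granting well-behaved, uniformly distributed stacks, the sandwiching argument then closes $C4$ verbatim: the stack-contained paths form a uniformly distributed, well-behaved subset $W$ of the osculating partition, and every remaining osculating path lies between two members of $W$ and is therefore well-behaved too.

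The remaining ingredient, and the main obstacle, is to generalise the exhaustive lemma that each stack of P3 contains an osculating path of the centroid-dual running strictly inside it. The first step is to use the finite local complexity from $C0$ to replace P3's seven central tile configurations by the finite list of central tile configurations of $T$, and then to check, for each, that the centroid-dual edges crossing the two stack-defining edges of the central tile join into a subpath that neither transversely crosses another dual edge nor fails to match the subpaths of the preceding and succeeding stack tiles. The difficulty is that this is not automatic: the osculating pairing at a degree-four vertex is \emph{forced} by the downward orientation, and for some configurations it may route the path out through a slanted edge into the neighbouring stack rather than straight through, so a fully-contained osculating path need not exist tile by tile. What I expect to be hard, therefore, is to find a configuration-independent reason---an invariant attached to the alternation condition, again analogous to the Sturmian balance of Section~\ref{sec:two}---that forces the in-stack subpaths to connect, or, failing that, to isolate the extra hypothesis on the orientation star under which every case in the finite list is favourable. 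This is precisely the gap that keeps the statement a conjecture rather than a theorem.
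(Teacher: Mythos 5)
The statement you are addressing is stated in the paper only as a conjecture; the authors give no proof of it, offering only the remark that it is based on empirical exploration beyond the P3 case (e.g.\ the Ammann--Beenker piece in their first figure). So there is no proof in the paper to compare yours against, and the honest conclusion of your proposal --- that the statement remains a conjecture --- is exactly the paper's own position. Your roadmap is faithful to the architecture the authors use for Theorem~\ref{thm:penrosep3}: $C1$--$C3$ transfer via the homeomorphism with the multigrid, $C0$ via disjoint incircles plus finite local complexity, and $C4$ via well-behaved stacks (Lemma~\ref{thm:wellstack}), uniform distribution of stacks (Lemma~\ref{thm:uniformstack}), and the sandwiching argument. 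You also correctly identify the two genuine obstructions: first, that Lemma~\ref{thm:wellstack} is proved only for a \emph{regular} orientation star with \emph{equal} spacing in every family, whereas the conjecture as written only demands periodically spaced lines, so the left/right bend-balancing is not automatic in the general case; and second, that the lemma asserting each stack contains an in-stack osculating path of the centroid-dual is proved for P3 by exhausting de~Bruijn's seven central tile configurations, a case analysis with no configuration-independent replacement currently available. That second point is precisely the missing ingredient, and your suggestion to look for an invariant tied to the alternation condition (analogous to the Sturmian balance of Section~\ref{sec:two}) is a sensible direction, though it remains a proposal rather than an argument. In short: you have not proved the conjecture, but you have correctly mapped which parts of the P3 proof generalise, which do not, and why; no step you assert as established is wrong, and the gap you isolate is the real one.
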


\section{Lace pattern from the Ammann bar decoration of P3}
There are several ways in which the P3 tiles can be decorated to enforce the local matching rules.  One of the most well known is the arcs proposed by Conway which form attractive curved designs with local fivefold symmetry.  Another significant decoration is the  Ammann bars, shown in Figure~\ref{fig:ammann}a. The tiles are decorated by line segments and the local matching rule requires that adjacent tiles extend the segments, without bending, to form an infinite line in the limit.
Ammann bars were discovered by several people in the late 1970s~\cite{grunbaum} and named after Ammann because he was
the first to recognize that they clearly illustrate how the matching rules affect not just adjacent tiles but also tiles at a distance.

As with the pentagrid, the Ammann bars fall into five regularly-spaced families of parallel lines in general position.
The orientation star of the sets is a regular spacing of $\pi i /5$ and the lines are in general position.  A key difference between the Ammann bars and the pentagrid is that the spacing between lines is not periodic but rather a sequence of long and short spaces corresponding to a Fibonacci word.  Whereas the faces of the Pentagrid can be vanishingly small and belong to an infinite set of shapes, Ammann bars divide the plane into a finite set of distinct tile shapes.

We can create a graph drawing from the Ammann bars, just as we have done in previous examples, by placing a vertex at every line intersection and an edge between consecutive pairs of vertices along a line. A set of edge directions is assigned by choosing an up vector, rotating the tiling until no line in the Ammann bars is horizontal and directing all lines downward.

\begin{theorem}
There exists a set of directions for the edges of the graph drawing derived from the Ammann bar decoration of the P3 tiling that satisfies conditions $C0$--$C4$  and thus forms a workable bobbin lace pattern.
\end{theorem}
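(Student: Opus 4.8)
The plan is to verify $C0$--$C4$ along the lines of Theorem~\ref{thm:penrosep3}, using that the Ammann bars form a multigrid with a regular orientation star, but now working with the line arrangement itself in place of a centroid-dual. Conditions $C1$, $C2$ and $C3$ go through exactly as for the bigrid of Section~\ref{sec:two}. Since the lines are in general position, each intersection is the meeting of precisely two lines, so every vertex has degree four; directing all lines downward splits these into two incoming and two outgoing edges, and the two outgoing edges are rotationally consecutive, giving $C1$. The arrangement is infinite and any two intersections are joined by a walk along the lines, giving $C2$. No edge points upward, so there is no directed cycle; combined with the fact (used again below) that the faces are the finitely many Ammann-bar tiles, each a polygon with at least three sides, this gives $C3$.

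For $C0$ I would invoke the geometric feature that distinguishes the Ammann bars from the pentagrid: whereas pentagrid cells can be arbitrarily small, the Ammann bars cut the plane into only finitely many distinct tile shapes. Finiteness of this tile set makes the faces uniformly bounded and bounds every edge length both above and below, so no two intersection points are arbitrarily close and no sufficiently large disc avoids an intersection; hence the vertices form a Delone set and $C0$ holds.

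The substance of the theorem is $C4$, which I would establish with the three-step template from the proof of Theorem~\ref{thm:penrosep3}. Writing $T$ for the rhombic GDM-dual tiling of the Ammann-bar multigrid, I would first show that each stack of $T$ contains an osculating path of the arrangement and that every stack of $T$ is well-behaved. Lemma~\ref{thm:wellstack} gives the latter for a multigrid with \emph{uniform} spacing, by pairing the families $A_i$ and $A_i'$ that meet a fixed line at angles $\pm\theta_i$ and invoking the alternation condition so the left and right bends cancel. The new ingredient is that the Ammann bars are spaced by a Fibonacci word rather than uniformly, so strict alternation fails; I would replace it by a \emph{balanced} alternation, arguing as in Section~\ref{sec:two} that, because the Fibonacci word is balanced, the numbers of $A_i$- and $A_i'$-intersections along any segment of the fixed line differ by at most one, so the accumulated bend stays bounded and the stack remains well-behaved. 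Finally, as in Lemma~\ref{thm:uniformstack}, the stacks are uniformly distributed, so every osculating path not contained in a stack is sandwiched between two well-behaved ones and is therefore itself well-behaved, completing $C4$.

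I expect the balanced-alternation step to be the main obstacle. The counterexample of Figure~\ref{fig:zigzag} shows that unbalanced long/short spacing lets an osculating path drift arbitrarily far from every line, so the argument must genuinely use balancedness and not merely the presence of two spacings. The delicate point is to bound the cumulative bend of a stack---and hence the perpendicular excursion of the osculating path it carries---uniformly in the arrangement's own geometry, where the tiles are bounded but the bends contributed by different families have different magnitudes; here the finiteness of the Ammann-bar tile set should keep the arrangement and its rhombic dual $T$ boundedly comparable, so that the Fibonacci balance property can do the essential work of controlling this sum.
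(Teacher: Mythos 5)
Your treatment of $C0$--$C3$ matches the paper's: degree four from general position, downward orientation for acyclicity, and the finite prototile set of the Ammann-bar complement for uniform boundedness and the Delone property. The divergence, and the problem, is in $C4$. You propose to take stacks in $T$, the GDM-dual of the Ammann-bar multigrid, and since that multigrid is Fibonacci-spaced rather than equally spaced, Lemma~\ref{thm:wellstack} does not apply to it; you are then forced to invent a ``balanced alternation'' replacement, which you correctly identify as the crux but leave unproven. That replacement is not a routine adaptation: Lemma~\ref{thm:wellstack} cancels left and right bends via strict interleaving of the $A_i$- and $A_i'$-intersections along a fixed line, and with Fibonacci spacing you would need a bounded-discrepancy argument for the counts of each family in every window, plus control of the differing bend magnitudes across the $\floor{n/2}$ pairs of families --- essentially a new lemma. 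There is also a category confusion in the setup: the osculating paths live in the primal line arrangement while the stacks of $T$ live in its dual, and you never say how a stack of $T$ ``contains'' a path of the arrangement.

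The paper avoids all of this by choosing a different tiling to supply the stacks: the P3 tiling that the Ammann bars decorate. P3 is the GDM-dual of the \emph{pentagrid}, whose lines are equally spaced, so Lemma~\ref{thm:wellstack} applies verbatim and P3's stacks are well-behaved; Lemma~\ref{thm:uniformstack} makes them uniformly distributed. Since the bars are drawn on the P3 tiles, the bar arrangement and the P3 tiling are already superimposed, and the only new work is an exhaustive check over de Bruijn's seven central tile configurations (Figure~\ref{fig:ammann}c) that each P3 stack contains an osculating path of the bar drawing; sandwiching then handles the remaining paths. So the missing idea in your proposal is that the well-behavedness should be imported from the uniformly spaced pentagrid via P3, not derived from the non-uniformly spaced Ammann bars themselves. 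As written, your $C4$ argument rests on an unproved and nontrivial lemma, and the concern you raise about Figure~\ref{fig:zigzag} shows you are aware the balancedness step cannot be waved through.
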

\begin{proof}

$C0$: A patch of the Ammann bar decoration for P3 can be obtained via iterative substitution, independent of the tiling it decorates.  It is self-similar, therefore, the vertices form a Delone point set~\cite{senechal}. The complement of the Ammann bars divides the plane into cells from a finite set of prototiles from which we can observe that the faces are uniformly bounded.

$C1$: Each vertex is the intersection of two directed lines, therefore, it has in-degree~2 and out-degree~2.

$C2$: A vertex is placed at every line crossing so the graph embedding is planar. Further, the straight lines prevent self-loop or bigons.

$C3$: Since the graph drawing has no upward edges it has no directed cycles.

$C4$: As shown in Figure~\ref{fig:ammann}, we again exhaustively examine each of the seven de Bruijn configurations and show that every stack in the P3 tiling contains an osculating path from the Ammann bar decoration.  Using Lemmas~\ref{thm:wellstack}~and~\ref{thm:uniformstack}, that the rhombs of the P3 tiling form well-behaved stacks that are uniformly distributed throughout the tiling, we can conclude that an osculating partition of the Ammann bars of P3 contains a uniform distribution of well-behaved paths.  Because  the remaining paths are sandwiched between well-behaved paths, all paths in the partition are well-behaved.

\vfigbegin
    % width, filename
    \def\svgwidth{\columnwidth}
    %% Creator: Inkscape inkscape 0.92.3, www.inkscape.org
%% PDF/EPS/PS + LaTeX output extension by Johan Engelen, 2010
%% Accompanies image file '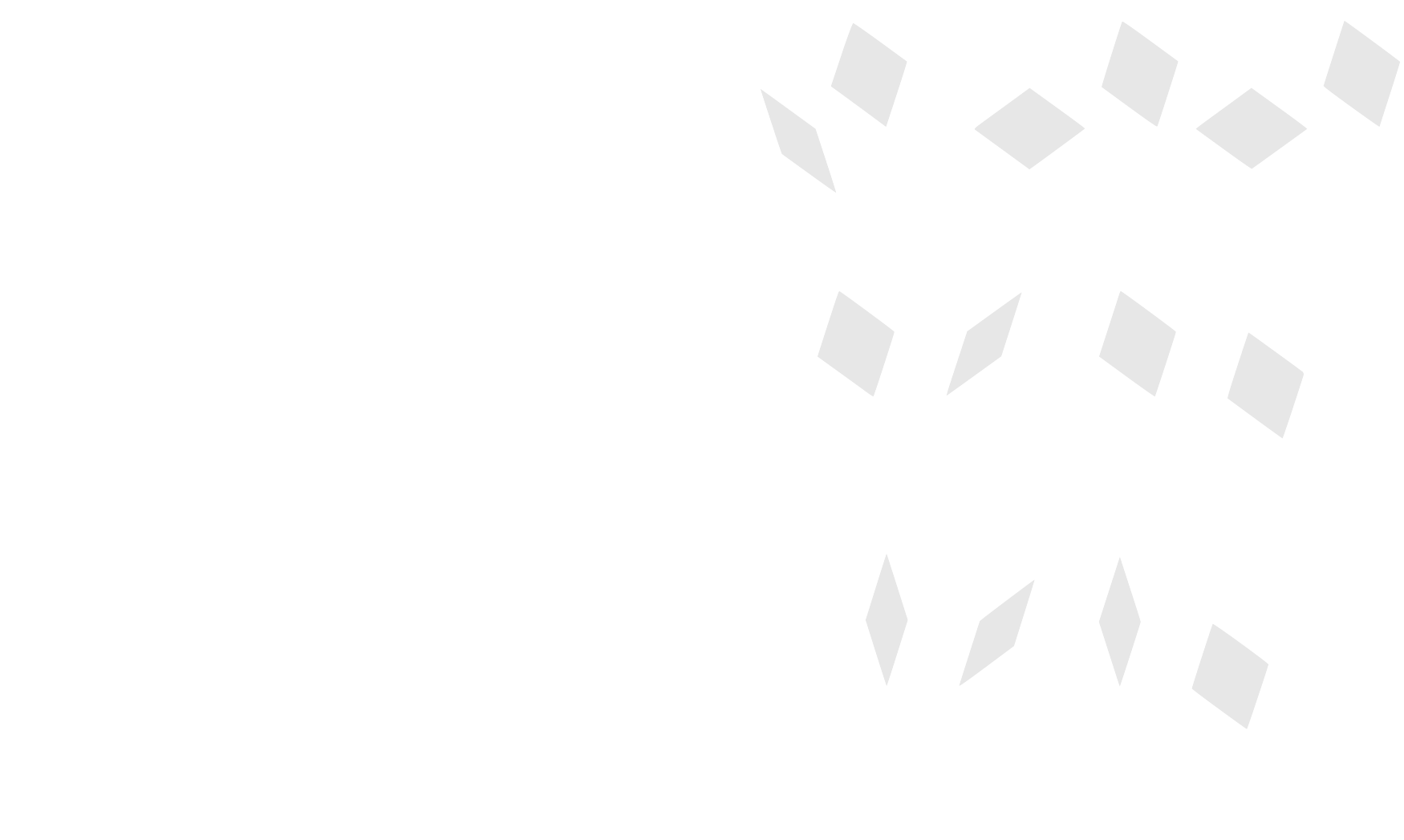' (pdf, eps, ps)
%%
%% To include the image in your LaTeX document, write
%%   \input{<filename>.pdf_tex}
%%  instead of
%%   \includegraphics{<filename>.pdf}
%% To scale the image, write
%%   \def\svgwidth{<desired width>}
%%   \input{<filename>.pdf_tex}
%%  instead of
%%   \includegraphics[width=<desired width>]{<filename>.pdf}
%%
%% Images with a different path to the parent latex file can
%% be accessed with the `import' package (which may need to be
%% installed) using
%%   \usepackage{import}
%% in the preamble, and then including the image with
%%   \import{<path to file>}{<filename>.pdf_tex}
%% Alternatively, one can specify
%%   \graphicspath{{<path to file>/}}
%% 
%% For more information, please see info/svg-inkscape on CTAN:
%%   http://tug.ctan.org/tex-archive/info/svg-inkscape
%%
\begingroup%
  \makeatletter%
  \providecommand\color[2][]{%
    \errmessage{(Inkscape) Color is used for the text in Inkscape, but the package 'color.sty' is not loaded}%
    \renewcommand\color[2][]{}%
  }%
  \providecommand\transparent[1]{%
    \errmessage{(Inkscape) Transparency is used (non-zero) for the text in Inkscape, but the package 'transparent.sty' is not loaded}%
    \renewcommand\transparent[1]{}%
  }%
  \providecommand\rotatebox[2]{#2}%
  \newcommand*\fsize{\dimexpr\f@size pt\relax}%
  \newcommand*\lineheight[1]{\fontsize{\fsize}{#1\fsize}\selectfont}%
  \ifx\svgwidth\undefined%
    \setlength{\unitlength}{504bp}%
    \ifx\svgscale\undefined%
      \relax%
    \else%
      \setlength{\unitlength}{\unitlength * \real{\svgscale}}%
    \fi%
  \else%
    \setlength{\unitlength}{\svgwidth}%
  \fi%
  \global\let\svgwidth\undefined%
  \global\let\svgscale\undefined%
  \makeatother%
  \begin{picture}(1,0.5977414)%
    \lineheight{1}%
    \setlength\tabcolsep{0pt}%
    \put(0,0){\includegraphics[width=\unitlength,page=1]{ammann2.pdf}}%
    \put(0.25727173,0.0014805){\color[rgb]{0,0,0}\makebox(0,0)[lt]{\lineheight{0}\smash{\begin{tabular}[t]{l}\g{(b)}\end{tabular}}}}%
    \put(0.76063459,0.00325577){\color[rgb]{0,0,0}\makebox(0,0)[lt]{\lineheight{0}\smash{\begin{tabular}[t]{l}\g{(c)}\end{tabular}}}}%
    \put(0,0){\includegraphics[width=\unitlength,page=2]{ammann2.pdf}}%
    \put(0.76131918,0.42658882){\color[rgb]{0,0,0}\makebox(0,0)[lt]{\lineheight{0}\smash{\begin{tabular}[t]{l}\g{$dB2$}\end{tabular}}}}%
    \put(0,0){\includegraphics[width=\unitlength,page=3]{ammann2.pdf}}%
    \put(0.57667783,0.42658882){\color[rgb]{0,0,0}\makebox(0,0)[lt]{\lineheight{0}\smash{\begin{tabular}[t]{l}\g{$dB1$}\end{tabular}}}}%
    \put(0,0){\includegraphics[width=\unitlength,page=4]{ammann2.pdf}}%
    \put(0.91623725,0.42658882){\color[rgb]{0,0,0}\makebox(0,0)[lt]{\lineheight{0}\smash{\begin{tabular}[t]{l}\g{$dB3$}\end{tabular}}}}%
    \put(0,0){\includegraphics[width=\unitlength,page=5]{ammann2.pdf}}%
    \put(0.85028899,0.22819796){\color[rgb]{0,0,0}\makebox(0,0)[lt]{\lineheight{0}\smash{\begin{tabular}[t]{l}\g{$dB5$}\end{tabular}}}}%
    \put(0,0){\includegraphics[width=\unitlength,page=6]{ammann2.pdf}}%
    \put(0.65791788,0.22819796){\color[rgb]{0,0,0}\makebox(0,0)[lt]{\lineheight{0}\smash{\begin{tabular}[t]{l}\g{$dB4$}\end{tabular}}}}%
    \put(0,0){\includegraphics[width=\unitlength,page=7]{ammann2.pdf}}%
    \put(0.81109561,0.02619219){\color[rgb]{0,0,0}\makebox(0,0)[lt]{\lineheight{0}\smash{\begin{tabular}[t]{l}\g{$dB7$}\end{tabular}}}}%
    \put(0,0){\includegraphics[width=\unitlength,page=8]{ammann2.pdf}}%
    \put(0.6678736,0.02103681){\color[rgb]{0,0,0}\makebox(0,0)[lt]{\lineheight{0}\smash{\begin{tabular}[t]{l}\g{$dB6$}\end{tabular}}}}%
    \put(0,0){\includegraphics[width=\unitlength,page=9]{ammann2.pdf}}%
    \put(0.25727173,0.5099678){\color[rgb]{0,0,0}\makebox(0,0)[lt]{\lineheight{0}\smash{\begin{tabular}[t]{l}\g{(a)}\end{tabular}}}}%
    \put(0,0){\includegraphics[width=\unitlength,page=10]{ammann2.pdf}}%
  \end{picture}%
\endgroup%

\vfigend{fig:ammann}{Ammann bars for P3: a) prototiles with bar decorations, b) a patch of P3, c) a path from the osculating partition appears in each of de Bruijn's central tile configurations.}{Ammann bars}

\end{proof}

The Ammann bar decoration of P3 can be generalized to decorate a family of quasiperiodic tilings.  The defining characteristic of these decorations is that the line segments in adjacent tiles align to form straight lines in the infinite tiling, lines which we refer to as \emph{Ammann lines}.  Boyle and Steinhardt give a detailed analysis of these patterns~\cite{boyle} and use this generalization to generate new tilings with matching rules.  We note that some of their Ammann line patterns, such as the new 12-fold-B and 12-fold-C patterns, are not in general position.

\begin{conjecture}
For every set of Ammann lines in general position there exists a set of edge directions for the associated graph drawing that satisfies $C0$--$C4$.  Therefore Ammann lines form a workable bobbin lace pattern.
\end{conjecture}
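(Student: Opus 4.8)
The plan is to follow the skeleton of the $C4$ argument just given for the Ammann bars of P3 and to replace each ingredient that was special to P3 by a statement about a general rhomb tiling produced by the generalized dual method. Conditions $C0$--$C3$ need essentially no new ideas. Because the Ammann lines are in general position, exactly two lines cross at each vertex, so every vertex has degree four; after choosing an up-vector not perpendicular to any family and directing all lines downward, each vertex acquires in-degree and out-degree two, giving $C1$. The arrangement is planar and its straight edges admit neither self-loops nor bigons, giving $C2$; and the absence of upward edges forbids directed cycles, giving $C3$. For $C0$ I would appeal to the self-similar, substitutive nature of an Ammann line system to obtain a Delone vertex set, and to the finiteness of its prototile set to obtain uniform boundedness, exactly as in the P3 case.

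All the content lies in $C4$, where I would reuse the three-part structure: (i) the stacks of the rhomb tiling decorated by the Ammann lines are well-behaved, (ii) each such stack contains an osculating path of the Ammann-line drawing that never crosses the stack's lateral boundary, and (iii) the stacks are uniformly distributed, so that any remaining osculating path is sandwiched between two well-behaved ones and is therefore itself well-behaved. Lemma~\ref{thm:wellstack} already yields (i) for every multigrid whose orientation star is regular, and the proof of Lemma~\ref{thm:uniformstack} uses only the correspondence between a parallel family of grid lines and a stack family, so it transfers unchanged to give (iii). The first point I would have to settle is therefore that the tiling decorated by a general-position Ammann line system is itself generated by the generalized dual method from a multigrid with a \emph{regular} orientation star. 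For the $n$-fold-symmetric families this is immediate, since the orientation star is then a regular spacing of $\pi i/n$; I would try to argue that general position already forces this regularity, and otherwise I would have to extend Lemma~\ref{thm:wellstack}, whose balance argument presently relies on pairing each family at angle $\theta_i$ with one at angle $-\theta_i$.

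The hard part is step (ii). For P3 it was discharged by de Bruijn's finite list of seven central tile configurations, verifying by hand that a top-to-bottom osculating subpath threads each one and that consecutive subpaths splice together along a stack. No finite list of configurations is available once we quantify over \emph{every} Ammann line system, so the exhaustive check must be replaced by a configuration-independent argument. My plan is to describe the osculating partition purely locally---at each vertex the path turns from one family to the rotationally-consecutive family rather than continuing straight---and then to show that this turning rule, combined with the defining property of Ammann lines that adjacent tiles extend each segment without bending, forces the path to enter and leave every tile of a fixed stack through the edges parallel to the stack's defining edge, and hence to remain inside the stack. Proving that this local behaviour is globally consistent across the self-similar hierarchy, so that the subpaths neither escape the stack nor fail to reconnect, is where I expect the real difficulty, and it is the step most likely to require the full structure theory of Ammann lines developed by Boyle and Steinhardt rather than the elementary tools used so far.

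With (i)--(iii) in hand, the conclusion follows as before: Lemmas~\ref{thm:wellstack} and~\ref{thm:uniformstack} provide a uniformly distributed set $W$ of well-behaved osculating paths in the oriented Ammann-line drawing, and every path outside $W$ lies between two members of $W$ and is therefore well-behaved too. This verifies $C4$, and together with $C0$--$C3$ it shows that the chosen edge directions satisfy $C0$--$C4$.
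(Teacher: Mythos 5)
The statement you are proving is left as a \emph{conjecture} in the paper: the authors supply no proof, so there is nothing to compare your argument against except the preceding theorem for the specific case of the Ammann bars of P3. Your proposal correctly identifies the right skeleton (transplant $C0$--$C3$, then reuse the three-part $C4$ argument), but it is not a proof, and you say so yourself: step (ii) --- that every stack of the underlying rhomb tiling contains an osculating path of the Ammann-line drawing --- is exactly the step that was discharged for P3 by an exhaustive check of de Bruijn's seven central tile configurations, and no analogue of that finite classification is available once you quantify over all Ammann line systems. Your proposed replacement (a local turning rule plus the no-bending property of Ammann segments forcing the path to stay inside a stack) is only a plan; nothing in the proposal establishes that the locally-defined subpaths actually remain within the stack or reconnect across tile boundaries, and that is precisely where the conjecture's difficulty lives.

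Two further gaps deserve naming. First, you lean on Lemma~\ref{thm:wellstack} for step (i), but that lemma is stated for multigrids of \emph{equally spaced} lines with a \emph{regular} orientation star; its balance argument pairs the family meeting a chosen line at angle $\theta_i$ with the family meeting it at $-\theta_i$. General position (no three lines concurrent) says nothing about the angular regularity of the star, so your hope that ``general position already forces this regularity'' is unfounded --- for the generalized Ammann patterns of Boyle and Steinhardt you would need to verify star regularity case by case or genuinely extend the lemma. Second, the tiling that a general Ammann line system decorates need not arise from the GDM applied to \emph{periodically} spaced lines, so neither Lemma~\ref{thm:wellstack} nor the uniform-distribution argument of Lemma~\ref{thm:uniformstack} (whose proof uses the periodicity of the dual line set along a transversal stack) transfers ``unchanged'' as you claim. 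In short: your outline is a reasonable research programme for attacking the conjecture, and it matches the route the authors would presumably take, but it leaves open the same essential steps that the authors leave open, and it should not be presented as a proof.
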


\section{Assigning a braid word mapping}
As mentioned in Section~\ref{sec:problem}, a lace pattern consists of two elements: a graph drawing and a mapping from each vertex to a braid word.
Deciding which braid word to use at each vertex is an artistic choice: any combination of crosses and twists containing at least one cross will work.

In a periodic pattern, we would typically select a unit cell and assign a braid word to each vertex in that parallelogram, paying attention to any symmetry elements that we wish to emphasize and considering how symmetrically related vertices are positioned in space.\footnote{When one vertex is rotated relative to another, the edge directions around the two vertices may be quite different.  It may not be possible to construct two braid words that look sufficiently similar in both orientations.
}
The same braid word mapping would then be applied to all translations of the unit cell.  We can therefore describe the mapping of an unbounded number of vertices with a finite number of braid words.

In a quasiperiodic pattern, there is no unit cell.  How does this affect the assignment of braid words?
Fortunately, for the patterns explored here, there is a repeating unit called a quasi-unit cell~\cite{steinhardt1999}. Roughly speaking, a unit cell tiles the plane without gaps or overlaps whereas copies of a quasi-unit cell are allowed to overlap, creating a covering (as opposed to a tiling) of the plane. So once again we are able to consider a single subpatch of the pattern, make some artistic choices and apply them to the entire pattern.
The quasi-unit cell has a finite number of vertices  but, unlike the unit cell, it can appear in a finite number of distinct orientations, all of which must be considered.  To simplify the process, we  first look at the number of twists that we want to appear in two threads travelling between braids, label the edges of the drawing accordingly, and use the edge labels to determine the braid word at each vertex for each rotation.  This process is illustrated in Figure~\ref{fig:design} for the lace appearing in Figure~\ref{fig:lace2}.

\vfigbegin
    % width, filename
    \def\svgwidth{\columnwidth}
    %% Creator: Inkscape inkscape 0.92.3, www.inkscape.org
%% PDF/EPS/PS + LaTeX output extension by Johan Engelen, 2010
%% Accompanies image file '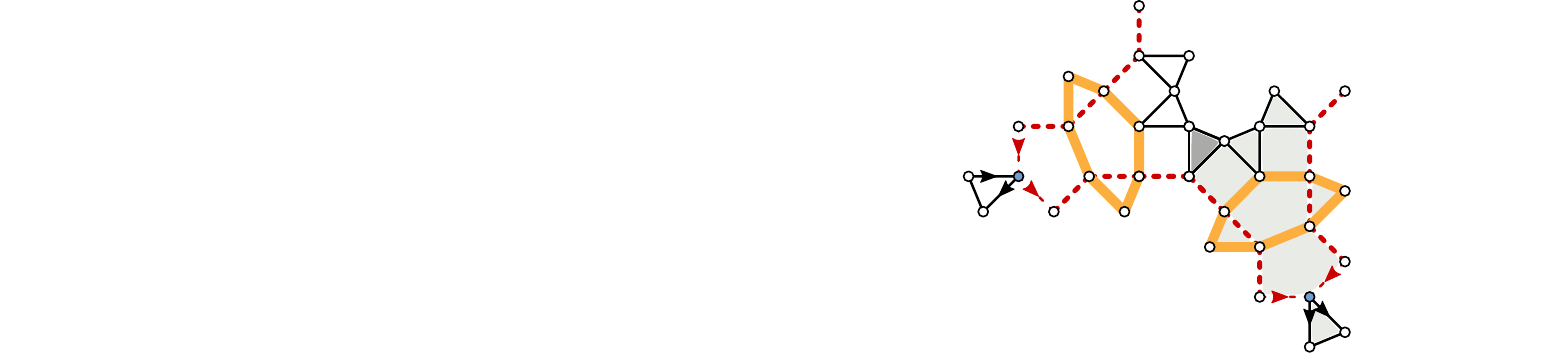' (pdf, eps, ps)
%%
%% To include the image in your LaTeX document, write
%%   \input{<filename>.pdf_tex}
%%  instead of
%%   \includegraphics{<filename>.pdf}
%% To scale the image, write
%%   \def\svgwidth{<desired width>}
%%   \input{<filename>.pdf_tex}
%%  instead of
%%   \includegraphics[width=<desired width>]{<filename>.pdf}
%%
%% Images with a different path to the parent latex file can
%% be accessed with the `import' package (which may need to be
%% installed) using
%%   \usepackage{import}
%% in the preamble, and then including the image with
%%   \import{<path to file>}{<filename>.pdf_tex}
%% Alternatively, one can specify
%%   \graphicspath{{<path to file>/}}
%% 
%% For more information, please see info/svg-inkscape on CTAN:
%%   http://tug.ctan.org/tex-archive/info/svg-inkscape
%%
\begingroup%
  \makeatletter%
  \providecommand\color[2][]{%
    \errmessage{(Inkscape) Color is used for the text in Inkscape, but the package 'color.sty' is not loaded}%
    \renewcommand\color[2][]{}%
  }%
  \providecommand\transparent[1]{%
    \errmessage{(Inkscape) Transparency is used (non-zero) for the text in Inkscape, but the package 'transparent.sty' is not loaded}%
    \renewcommand\transparent[1]{}%
  }%
  \providecommand\rotatebox[2]{#2}%
  \newcommand*\fsize{\dimexpr\f@size pt\relax}%
  \newcommand*\lineheight[1]{\fontsize{\fsize}{#1\fsize}\selectfont}%
  \ifx\svgwidth\undefined%
    \setlength{\unitlength}{448.32279968bp}%
    \ifx\svgscale\undefined%
      \relax%
    \else%
      \setlength{\unitlength}{\unitlength * \real{\svgscale}}%
    \fi%
  \else%
    \setlength{\unitlength}{\svgwidth}%
  \fi%
  \global\let\svgwidth\undefined%
  \global\let\svgscale\undefined%
  \makeatother%
  \begin{picture}(1,0.23342817)%
    \lineheight{1}%
    \setlength\tabcolsep{0pt}%
    \put(0,0){\includegraphics[width=\unitlength,page=1]{design2.pdf}}%
    \put(0.75523305,0.03081032){\color[rgb]{0,0,0}\makebox(0,0)[t]{\lineheight{1.25}\smash{\begin{tabular}[t]{c}\g{CTCT}\end{tabular}}}}%
    \put(0,0){\includegraphics[width=\unitlength,page=2]{design2.pdf}}%
    \put(0.62734618,0.18091287){\color[rgb]{0,0,0}\makebox(0,0)[t]{\lineheight{1.25}\smash{\begin{tabular}[t]{c}\g{CTCL}\end{tabular}}}}%
    \put(0,0){\includegraphics[width=\unitlength,page=3]{design2.pdf}}%
    \put(0.85272497,0.195819){\color[rgb]{0,0,0}\makebox(0,0)[t]{\lineheight{1.25}\smash{\begin{tabular}[t]{c}\g{overlap}\end{tabular}}}}%
    \put(0,0){\includegraphics[width=\unitlength,page=4]{design2.pdf}}%
    \put(0.45934304,0.16976012){\color[rgb]{0,0,0}\makebox(0,0)[lt]{\lineheight{1.25}\smash{\begin{tabular}[t]{l}\g{1 twist}\end{tabular}}}}%
    \put(0,0){\includegraphics[width=\unitlength,page=5]{design2.pdf}}%
    \put(0.46178615,0.11301296){\color[rgb]{0,0,0}\makebox(0,0)[lt]{\lineheight{1.25}\smash{\begin{tabular}[t]{l}\g{no twist}\end{tabular}}}}%
    \put(0.46393233,0.05908222){\color[rgb]{0,0,0}\makebox(0,0)[lt]{\lineheight{1.25}\smash{\begin{tabular}[t]{l}\g{grand Venetian cord}\end{tabular}}}}%
    \put(0,0){\includegraphics[width=\unitlength,page=6]{design2.pdf}}%
    \put(0.29139724,0.00375532){\color[rgb]{0,0,0}\makebox(0,0)[t]{\lineheight{1.25}\smash{\begin{tabular}[t]{c}\g{(b)}\end{tabular}}}}%
    \put(0.09407944,0.00375532){\color[rgb]{0,0,0}\makebox(0,0)[t]{\lineheight{1.25}\smash{\begin{tabular}[t]{c}\g{(a)}\end{tabular}}}}%
    \put(0.6817048,0.00375532){\color[rgb]{0,0,0}\makebox(0,0)[t]{\lineheight{1.25}\smash{\begin{tabular}[t]{c}\g{(c)}\end{tabular}}}}%
    \put(0,0){\includegraphics[width=\unitlength,page=7]{design2.pdf}}%
  \end{picture}%
\endgroup%

\vfigend{fig:design}{Quasi-unit cell and braid word considerations for `Nodding bur-marigold': a) quasi-unit cell overlayed on Ammann-Beenker tiles, b) edges labelled with twist information, c) three copies of the quasi-unit cell showing the overlap of a face and two different braid words assigned to the same vertex of the quasi-unit cell based on rotation.}{Design}

\section{Artistic results and conclusion}
\vfigbegin
    % width, filename
    \def\svgwidth{\columnwidth}
    %% Creator: Inkscape inkscape 0.92.3, www.inkscape.org
%% PDF/EPS/PS + LaTeX output extension by Johan Engelen, 2010
%% Accompanies image file '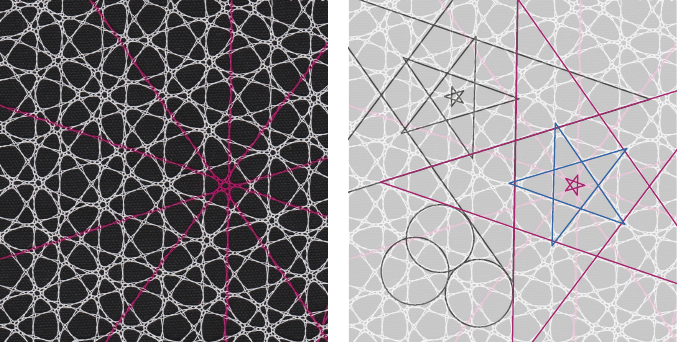' (pdf, eps, ps)
%%
%% To include the image in your LaTeX document, write
%%   \input{<filename>.pdf_tex}
%%  instead of
%%   \includegraphics{<filename>.pdf}
%% To scale the image, write
%%   \def\svgwidth{<desired width>}
%%   \input{<filename>.pdf_tex}
%%  instead of
%%   \includegraphics[width=<desired width>]{<filename>.pdf}
%%
%% Images with a different path to the parent latex file can
%% be accessed with the `import' package (which may need to be
%% installed) using
%%   \usepackage{import}
%% in the preamble, and then including the image with
%%   \import{<path to file>}{<filename>.pdf_tex}
%% Alternatively, one can specify
%%   \graphicspath{{<path to file>/}}
%% 
%% For more information, please see info/svg-inkscape on CTAN:
%%   http://tug.ctan.org/tex-archive/info/svg-inkscape
%%
\begingroup%
  \makeatletter%
  \providecommand\color[2][]{%
    \errmessage{(Inkscape) Color is used for the text in Inkscape, but the package 'color.sty' is not loaded}%
    \renewcommand\color[2][]{}%
  }%
  \providecommand\transparent[1]{%
    \errmessage{(Inkscape) Transparency is used (non-zero) for the text in Inkscape, but the package 'transparent.sty' is not loaded}%
    \renewcommand\transparent[1]{}%
  }%
  \providecommand\rotatebox[2]{#2}%
  \newcommand*\fsize{\dimexpr\f@size pt\relax}%
  \newcommand*\lineheight[1]{\fontsize{\fsize}{#1\fsize}\selectfont}%
  \ifx\svgwidth\undefined%
    \setlength{\unitlength}{194.87998581bp}%
    \ifx\svgscale\undefined%
      \relax%
    \else%
      \setlength{\unitlength}{\unitlength * \real{\svgscale}}%
    \fi%
  \else%
    \setlength{\unitlength}{\svgwidth}%
  \fi%
  \global\let\svgwidth\undefined%
  \global\let\svgscale\undefined%
  \makeatother%
  \begin{picture}(1,0.50431043)%
    \lineheight{1}%
    \setlength\tabcolsep{0pt}%
    \put(0,0){\includegraphics[width=\unitlength,page=1]{lace_p3_bars.pdf}}%
  \end{picture}%
\endgroup%

\vfigend{fig:laceP3bars}{`Ammann's web', Veronika Irvine 2019: Ammann P3 bar pattern worked in DMC~Cordonnet~Special~80 cotton thread.}{Lace Ammann bars of P3}

Using the patterns identified in the previous sections, we created the lace samples shown in Figures~\ref{fig:lace2},~\ref{fig:bigrid},~\ref{fig:laceP3bars}~and~\ref{fig:laceP3dual}.  To the best of our knowledge, these are the first quasiperiodic bobbin lace pieces.

Traditional bobbin lace is monochromatic. Because the regularity in quasiperiodic patterns is less obvious, we use colour to draw attention to some of the repeated elements.  In Figure~\ref{fig:lace2}, the petal shape is emphasized by yellow threads which are wrapped around two continuous white threads.  The wrapping, which is a variation on the traditional grand Venetian cord~\cite{cook}, is performed as the lace is constructed. In Figure~\ref{fig:laceP3bars}, we used hot pink threads to emphasize the five directions of the parallel line sets.  Where the pink threads intersect, they form a pentagram. The pink pentagram is inside a white pentagon which forms the core of a white pentagram, which intersects a larger white pentagon.  These concentric pentagrams continue to the edges of the piece.  Larger copies are a little harder to spot amongst the crisscrossing lines.  In Figure~\ref{fig:laceP3bars}, it is also possible to spot nearly circular shapes formed by copies of the quasi-unit cell.

\vfigbegin
    % width, filename
    \def\svgwidth{0.75\columnwidth}
    %% Creator: Inkscape inkscape 0.92.4, www.inkscape.org
%% PDF/EPS/PS + LaTeX output extension by Johan Engelen, 2010
%% Accompanies image file '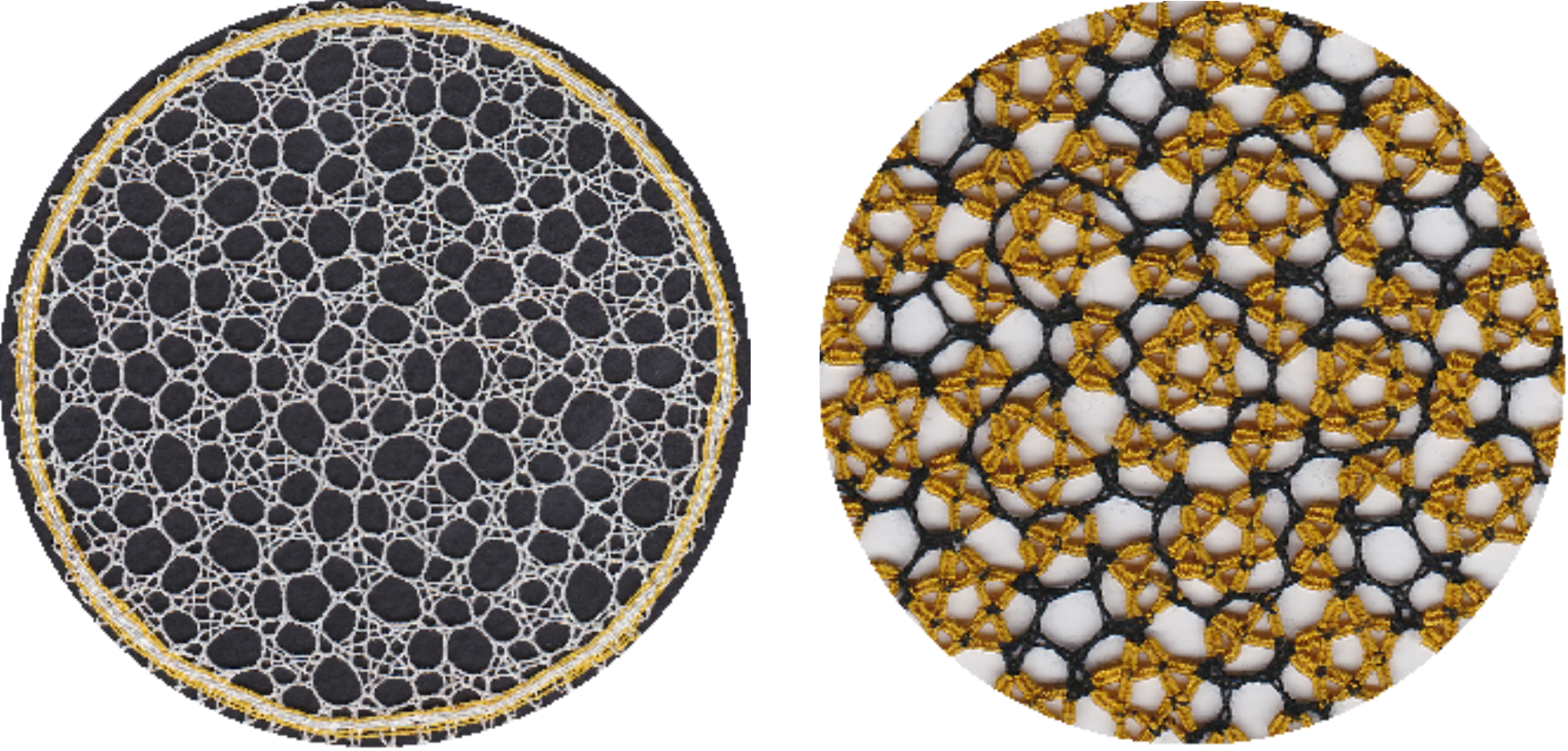' (pdf, eps, ps)
%%
%% To include the image in your LaTeX document, write
%%   \input{<filename>.pdf_tex}
%%  instead of
%%   \includegraphics{<filename>.pdf}
%% To scale the image, write
%%   \def\svgwidth{<desired width>}
%%   \input{<filename>.pdf_tex}
%%  instead of
%%   \includegraphics[width=<desired width>]{<filename>.pdf}
%%
%% Images with a different path to the parent latex file can
%% be accessed with the `import' package (which may need to be
%% installed) using
%%   \usepackage{import}
%% in the preamble, and then including the image with
%%   \import{<path to file>}{<filename>.pdf_tex}
%% Alternatively, one can specify
%%   \graphicspath{{<path to file>/}}
%% 
%% For more information, please see info/svg-inkscape on CTAN:
%%   http://tug.ctan.org/tex-archive/info/svg-inkscape
%%
\begingroup%
  \makeatletter%
  \providecommand\color[2][]{%
    \errmessage{(Inkscape) Color is used for the text in Inkscape, but the package 'color.sty' is not loaded}%
    \renewcommand\color[2][]{}%
  }%
  \providecommand\transparent[1]{%
    \errmessage{(Inkscape) Transparency is used (non-zero) for the text in Inkscape, but the package 'transparent.sty' is not loaded}%
    \renewcommand\transparent[1]{}%
  }%
  \providecommand\rotatebox[2]{#2}%
  \newcommand*\fsize{\dimexpr\f@size pt\relax}%
  \newcommand*\lineheight[1]{\fontsize{\fsize}{#1\fsize}\selectfont}%
  \ifx\svgwidth\undefined%
    \setlength{\unitlength}{520.82837677bp}%
    \ifx\svgscale\undefined%
      \relax%
    \else%
      \setlength{\unitlength}{\unitlength * \real{\svgscale}}%
    \fi%
  \else%
    \setlength{\unitlength}{\svgwidth}%
  \fi%
  \global\let\svgwidth\undefined%
  \global\let\svgscale\undefined%
  \makeatother%
  \begin{picture}(1,0.47886838)%
    \lineheight{1}%
    \setlength\tabcolsep{0pt}%
    \put(0,0){\includegraphics[width=\unitlength,page=1]{lace_p3_centroid.pdf}}%
  \end{picture}%
\endgroup%

\vfigend{fig:laceP3dual}{Two variations worked from the centroid-dual of P3 pattern.}{Lace P3}

As a practical observation, we note that the faces in the Ammann bar pattern range significantly in size.  This makes it challenging to choose the scale of the pattern relative to the thickness of the thread being used. Edge lengths in the centroid-dual of P3 are more regular and therefore much easier to work with.

In this paper we have laid the groundwork for exploring non-periodic patterns as bobbin lace grounds by presenting a generalized model.
We have proven the existence of simple quasiperiodic lace patterns based on Sturmian words as well as two richer self-similar patterns based on the P3 tilings and its Ammann bar decoration.  We have conjectured that the larger families to which these patterns belong also satisfy the conditions of bobbin lace.

\section*{Acknowledgements}
We would like to thank Robert Lang for making the recommendation to explore Ammann bars and Latham Boyle for his help with understanding generalized Ammann patterns.  This research was supported by NSERC.

\end{document}